\documentclass{article}[12pt]
\usepackage[a4paper]{geometry}
\usepackage[utf8]{inputenc}
\usepackage[T1]{fontenc}
\usepackage{lmodern}
\usepackage{amsmath,amssymb,amsthm} \allowdisplaybreaks[1]
\usepackage{mathrsfs}
\usepackage{microtype}
\usepackage{latexsym}
\usepackage[colorlinks]{hyperref}
\usepackage{cleveref}
\usepackage{graphicx}
\usepackage{bookmark}
\usepackage{booktabs}
\usepackage{diagbox}
\usepackage{float}
\usepackage{color}
\usepackage{cite}
\usepackage{bm}
\title{Equivalence and Duality of Polycyclic Codes Associated with Trinomials over Finite Fields\thanks{This research is supported by the National Natural Science Foundation of China (12071001) and
the Excellent Youth Foundation of Natural Science Foundation of Anhui Province (1808085J20).}}
\author{Minjia Shi\thanks{smjwcl.good@163.com}, Haodong Lu\thanks{hdlu818@163.com}, Shuang Zhou\thanks{shzhou0815@163.com}, Jiarui Xu\thanks{jrxu39@163.com}, Yuhang Zhu\thanks{yuhangz01@163.com}
\thanks{the Key Laboratory of Intelligent
Computing and Signal Processing, Ministry of Education, School
of Mathematical Sciences, Anhui University, Hefei 230601, China.}}

\newtheorem{thm}{Theorem}[section]
\newtheorem{lm}{Lemma}[section]
\newtheorem{prop}{Proposition}[section]
\newtheorem{cor}{Corollary}[section]
\theoremstyle{definition}
\newtheorem{de}{Definition}[section]
\newtheorem{con}{Conjecture}[section]

\newtheorem{ex}{Example}[section]


\DeclareMathOperator{\ord}{\text{Ord}}

\begin{document}
\maketitle
\begin{abstract}
\noindent
In this paper, several conjectures proposed in \cite{ref1} are studied, involving the equivalence and duality of polycyclic codes associated with trinomials. According to the results, we give methods to construct isodual and self-dual polycyclic codes, and study the self-orthogonal and dual-containing polycyclic codes over $\mathbb{F}_2$. \\
\end{abstract}
\textbf{Keywords:} polycyclic codes, trinomials, equivalence, duality.\\
\textbf{MSC} 94B05, 05E30

\section{Introduction}
Polycyclic codes is a generalization of cyclic codes since the concept of associate polynomials is proposed. Algebraically, a polycylic code is an ideal of $\mathbb{F}_q/\langle f(x)\rangle$ for some $f(x)\in \mathbb{F}_q[x]$, and $f(x)$ is called the associate polynomial of this polycyclic code. Polycyclic codes include cyclic codes, constacyclic codes and etc, as $x^n-1$ can be associated with cyclic codes and $x^n-\lambda$ can be associated with constacyclic codes of length $n$ over $\mathbb{F}_q$. Polycylic codes have been known since $1972$ in \cite{PW}. As is well known, polycyclic
codes are shortened cyclic codes, and conversely shortened cyclic codes are polycyclic (P. 241, \cite{PW}). However, polycyclic codes were never received the same level of attention as cyclic codes and some of their generalizations. In recent years, polycyclic codes has also been studied widely in \cite{ref1,poly1,ref5,poly2,poly3,WSS}, which shows that polycyclic codes begun to receive more and more attention.

In this paper, we mainly solve several conjectures proposed in \cite{ref1}, which mainly around the equivalence and duality of polycyclic codes associated with trinomials. Some kinds of duality of polycyclic codes have been studied, such as the results in \cite{ref3,isodual}, and we will also give some information and results about the duality of polycyclic codes associated with trinomials in this paper.

In Section 2, we mainly give the basic concepts and results of polycyclic codes, they are the basis of this paper and will appear frequently in the following sections. In Section 3, a mapping $\varphi$ will be given to illustrate the relationship between two kinds of polycyclic codes corresponding to $f(x)$ and $f^*(x)$ respectively, where $f(x)$ is trinomial with $f(0)\neq 0$. This particular relationship cannot be obtained simply by choosing polynomials of the same order, which shows that Conjecture 3.5 in \cite{ref1} is wrong. A counter example of Conjecture 3.5 would be given in this section. In Section 4, a conjecture about isodual in \cite{ref1} (i.e. Conjecture 4.5) will be proven, which gives a simple way to construct some isodual polycyclic codes associated with trinomials. However, we will give a counter-example to illustrate conjecture 4.2 in \cite{ref1} is wrong, and point out that this method is not the only way to construct isodual polycyclic codes associated with trinomials over $\mathbb{F}_2$. In Section 5, we study the properties of generator polynomials for self-dual polycyclic codes associated with trinomials, and it turns out that the generators have a unified form $x^n-a$, where $a^2=-1$ in $\mathbb{F}_q$. And these results show that Conjecture 4.7 is correct in \cite{ref1}. In Section 6, we discussed Conjecture 4.10 and Conjecture 4.11 in \cite{ref1}. The results of the discussion show that it is impossible to find self-orthogonal or nontrivial dual-containing polycyclic codes over $\mathbb{F}_2$, and a dual-containing polycyclic code with prime length will be given to show that Conjecture 4.11 is wrong. Section 7 concludes this paper.

\section{Preliminaries}

In this part, we recall the basic definitions and theorems of polycyclic codes.

\begin{de}\cite{ref1}
A linear code $C$ is said to be right polycyclic with respect to $v=(v_0,v_1,\cdots,v_{n-1})$\\ $\in\mathbb{F}_{q}^{n}$, if for any codeword $(c_0,c_1,\cdots,c_{n-1})\in C$, its right polycyclic shift, i.e. $(0,c_0,c_1,\cdots,c_{n-2})+c_{n-1}(v_0,v_1,\cdots,v_{n-1})$ is also a codeword of $C$.
\end{de}

Similarly, we can define $C$ is left polycyclic. If $C$ is both left and right polycyclic, then it is bi-polycyclic. In our paper, when referring to polycyclic codes, we suppose it to be right polycyclic codes.

On the premise that there is no difference between vector $v$ and polynomial $v(x)$, each polycyclic code $C$ of length $n$ is associated with a polynomial $v(x)$ of degree less than $n$, and we say that $C$ is a polycyclic code associated with $x^n-v(x)$. Note that an associate polynomial of a polycyclic code may not be unique, i.e. a polycylic code can have multiple associate polynomials. Furthermore, polycyclic codes associated with $f(x)=x^n-v(x)$ are ideals of the factor ring $\mathbb{F}_q[x]/\langle f(x)\rangle$.

\begin{de}\cite{ref5}
A polycyclic code $C$ over $\mathbb{F}_q$ of length $n$ and dimension $k$ with associate polynomial $v(x)$ has a monic polynomial $g(x)$ of minimal degree $n-k$ that belongs to $C$, which divides $x^n-v(x)$ and is called the generator polynomial of $C$.	
\end{de}

For $\lambda\in \mathbb{F}_q$ and any $v=(v_0,v_1,\cdots,v_{n-1})\in\mathbb{F}_{q}^{n}$, the $\lambda$-constacyclic shift $\tau_\lambda$ on $\mathbb{F}_q^n$ is the shift $\tau_\lambda(v_0,v_1,\cdots,v_{n-1})=(\lambda v_{n-1},v_0,v_1,\cdots,v_{n-2})$. A linear code $C$ is said to be a $\lambda$-constacyclic code if $\tau_\lambda(C)\subseteq C$. Note that the associate polynomials of constacyclic codes have the form $x^n-\lambda$, where the length of constacyclic codes is $n$. Since polycyclic codes are a generalization of cyclic codes, so in the same way as cyclic codes, we can construct a generator matrix for $C$ from its generator polynomial.

\begin{thm} \cite{ref5}
A code $C\in \mathbb{F}_q^n$ is right polycyclic associated with polynomial $f(x)$ if and only if it has a $k\times n$ generator matrix of the form
\[
 G=\begin{pmatrix}
 	g_0 &g_1 &\cdots &g_{n-k} &0 &0 &\cdots &0 \\
 	0 &g_0 &g_1 &\cdots &g_{n-k} &0 &\cdots &0 \\
 	\vdots &\ddots &\ddots &\ddots &\ddots &\ddots &\ddots &\vdots \\
 	0 &\cdots &0 &g_{0} &g_1 &\cdots &g_{n-k} &0 \\
 	0 &\cdots &0 &0 &g_0 &g_1 &\cdots &g_{n-k} \\
 \end{pmatrix}_{k\times n},
 \]
 with $g_{n-k}\neq 0$. In this case $\langle g_0+g_1x+\cdots+g_{n-k}x^{n-k}\rangle$ is an ideal of $\mathbb{F}[x]/\langle f(x) \rangle$.
\end{thm}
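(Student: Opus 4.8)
The plan is to exploit the standard dictionary between the right polycyclic shift and multiplication by $x$ in the quotient ring $R=\mathbb{F}_q[x]/\langle f(x)\rangle$, where $f(x)=x^n-v(x)$. Writing a word $(c_0,\dots,c_{n-1})$ as $c(x)=\sum_{i=0}^{n-1}c_ix^i$, one computes $x\,c(x)=\sum_{i=0}^{n-1}c_ix^{i+1}$ and reduces the top term using $x^n\equiv v(x)\pmod{f(x)}$; this yields exactly $(0,c_0,\dots,c_{n-2})+c_{n-1}(v_0,\dots,v_{n-1})$, the right polycyclic shift of Definition 2.1. Hence a linear code is right polycyclic associated with $f$ precisely when it is closed under multiplication by $x$ in $R$, equivalently when it is an ideal of $R$. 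Both directions of the theorem will be read through this identification.

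For the forward direction, I would start from Definition 2.2: the polycyclic code $C$ of dimension $k$ has a monic generator polynomial $g(x)$ of degree $n-k$ dividing $f(x)$, with $g(x)\in C$. Since $C$ is closed under multiplication by $x$ and each $x^i g(x)$ for $0\le i\le k-1$ has degree at most $n-1$ (so no reduction modulo $f$ occurs), the polynomials $g(x),xg(x),\dots,x^{k-1}g(x)$ all lie in $C$. Their coefficient vectors have pairwise distinct leading positions $n-k,n-k+1,\dots,n-1$, so they are linearly independent; being $k$ independent vectors in the $k$-dimensional space $C$, they form a basis. Stacking these vectors is precisely the band matrix $G$, and since $g$ is monic we have $g_{n-k}=1\ne0$.

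For the converse, suppose $C$ is generated by a matrix $G$ of the stated shape with $g_{n-k}\ne0$. Its rows are the coordinate vectors of $g(x),xg(x),\dots,x^{k-1}g(x)$, where $g(x)=g_0+g_1x+\cdots+g_{n-k}x^{n-k}$, so $C=\operatorname{span}_{\mathbb{F}_q}\{x^i g(x):0\le i\le k-1\}$. To show $C$ is right polycyclic it suffices to verify closure under the $x$-shift. Applying the shift to the first $k-1$ rows just reproduces rows $2,\dots,k$, which already lie in $C$; the only nontrivial case is the last row $x^{k-1}g(x)$, whose shift is $x^k g(x)$, a polynomial of degree $n$ that must be reduced modulo $f$. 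Because $g(x)$ divides the associate polynomial $f(x)$ (equivalently, taking $f(x)=g_{n-k}^{-1}x^k g(x)=x^n-v(x)$), this reduction returns an element of $\langle g(x)\rangle$, hence of $C$. Thus $C$ is closed under multiplication by $x$, so it is an ideal $\langle g(x)\rangle$ of $R$ and therefore right polycyclic associated with $f$.

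I expect the genuine obstacle to sit entirely in the last step of the converse: controlling $x^k g(x)\bmod f(x)$ and confirming it lands back in $C$. This is exactly the point where the divisibility $g(x)\mid f(x)$ is forced, and it is what distinguishes a true polycyclic generator matrix from an arbitrary band matrix. Everything else—the shift/multiplication dictionary, linear independence from distinct leading terms, and the dimension count—is routine once the ring-theoretic translation is in place.
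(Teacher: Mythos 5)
Your argument is correct and is the standard one: the dictionary between the right polycyclic shift and multiplication by $x$ in $\mathbb{F}_q[x]/\langle f(x)\rangle$, the basis $g(x),xg(x),\dots,x^{k-1}g(x)$ obtained from distinct leading positions for the forward direction, and the construction $f(x)=g_{n-k}^{-1}x^k g(x)$ (or any degree-$n$ multiple of $g$) to handle the reduction of $x^k g(x)$ in the converse. The paper itself states this theorem as a cited result from \cite{ref5} without reproducing a proof, so there is no in-paper argument to compare against; your proof fills that role adequately and identifies the one genuinely non-routine point, namely that the converse requires choosing the associate polynomial to be a multiple of $g(x)$.
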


In this article, If we say that a polycyclic code $C$ is generated by $g(x)\mid f(x)$, which also means $f(x)$ is the associate polynomial of $C$ and $g(x)$ is the generator polynomial of $C$.

Since we are concerned with the duality of polycyclic codes, it is necessary to mention some definitions of it.

\begin{de}
Two codes $C_1,C_2$ over $\mathbb{F}_q$ with generator matrices $G_1$ and $G_2$ respectively are equivalent if there exists a monomial matrix over $\mathbb{F}_q$ such that $G_1M=G_2$. A monomial matrix is a square matrix with exactly one non-zero entry in each row and each column.
\end{de}

\begin{de}
The dual of $C$, denoted by $C^{\perp}$, is the set of vectors orthogonal to every codeword of $C$ under the Euclidean inner product. A code $C$ is self-dual if $C=C^{\perp}$; $C$ is iso-dual if $C$ is equivalent to $C^{\perp}$; $C$ is self-orthogonal if $C\subseteq C^{\perp}$ and $C$ is dual-containing if $C^{\perp}\subseteq C$.
\end{de}

Finally, here are some notations and theorems that will be used in this paper.

\begin{de} \cite{SAS}
	Let $f\in \mathbb{F}_q[x]$ be a nonzero polynomial. If $f(0)\neq 0$, then the least positive integer $e$ such that $f(x)$ divides $x^e-1$ is called the order of $f$ and denoted by $\ord(f)$. If $f(0)=0$, then $f(x)=x^hg(x)$, where $h\in\mathbb{N}$ and $g\in \mathbb{F}_q[x]$ with $g(0)\neq 0$ are uniquely determined, thus we can define $\ord(f)$ as $\ord(g)$.
\end{de}

\begin{de} \cite{SAS}
Let $g(x)=\sum_{i=0}^{k}g_ix^i$ be a polynomial of degree $k$\:($g_k\neq0$) over $\mathbb{F}_q$. Define the reciprocal polynomial $g^*(x)$ of $g(x)$ by
$g^*(x):=x^k\,g(1/x)=\sum_{i=0}^{k}g_{k-i}x^i$.
\end{de}

\begin{de}
Let $C$ be a code, its reversed code is defined as $C'$ if $(c_{n-1},c_{n-2},\cdots,c_0)\in C'$, where $(c_0,\cdots,c_{n-2},c_{n-1})\in C$.
\end{de}

It's clear that $(g^*(x))^*=g(x)$ and $(C')'=C$.

\begin{thm} \cite{SAS}
Let $C$ be a linear code and let $H$ be a parity-check matrix for $C$. Then the following statements are equivalent:\\
(i) $C$ has distance $d$;\\
(ii) any $d-1$ columns of $H$ are linearly independent and $H$ has $d$ columns that are linearly dependent.
\end{thm}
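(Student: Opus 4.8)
The plan is to exploit the defining property of a parity-check matrix, namely that a vector $\vec{c}=(c_1,\dots,c_n)$ lies in $C$ if and only if $H\vec{c}^{\,T}=\vec{0}$, together with the standard reduction that for a \emph{linear} code the minimum distance equals the minimum Hamming weight of a nonzero codeword (since $C$ is linear, $\vec{x}-\vec{y}\in C$ whenever $\vec{x},\vec{y}\in C$, so $d(\vec{x},\vec{y})=\mathrm{wt}(\vec{x}-\vec{y})$ ranges exactly over the weights of nonzero codewords). Writing $\vec{h}_1,\dots,\vec{h}_n$ for the columns of $H$, the condition $H\vec{c}^{\,T}=\vec{0}$ reads $\sum_{i=1}^{n}c_i\vec{h}_i=\vec{0}$. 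The heart of the argument is therefore the following dictionary: a nonzero codeword of weight $w$ with support $S$ (so $|S|=w$ and $c_i\neq 0$ exactly for $i\in S$) is precisely a nontrivial linear dependence $\sum_{i\in S}c_i\vec{h}_i=\vec{0}$ among the $w$ columns indexed by $S$, and conversely every such dependence yields a codeword of weight $w$.

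For the implication (i) $\Rightarrow$ (ii), I would assume $C$ has distance $d$ and argue in two halves. First, if some set of at most $d-1$ columns were linearly dependent, the dictionary would produce a nonzero codeword of weight at most $d-1$, contradicting the minimality of $d$; hence every $d-1$ columns are linearly independent. Second, a codeword attaining the minimum weight $d$ exists, and its support furnishes exactly $d$ columns whose associated nonzero coefficients witness a linear dependence.

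For the converse (ii) $\Rightarrow$ (i), I would again split into a lower and an upper bound on the distance. The independence of every $d-1$ columns forbids any nonzero codeword of weight $\le d-1$, so the distance is at least $d$. The existence of $d$ linearly dependent columns yields, via the dictionary, a nonzero codeword of weight at most $d$ (in fact, combined with the independence of every $d-1$ columns, none of the $d$ dependence coefficients can vanish, so the weight is exactly $d$), giving distance at most $d$. Together these force the distance to equal $d$.

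I do not expect a serious obstacle here; the argument is essentially careful bookkeeping. The only points requiring mild attention are the logical reading of the phrase \emph{``any $d-1$ columns are linearly independent''} as quantified over \emph{all} subsets of that size, and the explicit invocation of the minimum-distance-equals-minimum-weight fact for linear codes before the column-dependence dictionary can be applied.
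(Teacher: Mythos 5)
Your argument is correct and complete: the weight--dependence dictionary, the reduction of minimum distance to minimum nonzero weight for linear codes, and the observation that independence of every $d-1$ columns forces all coefficients in a $d$-column dependence to be nonzero are exactly the right ingredients. The paper itself states this theorem as a cited result from its reference [SAS] and gives no proof, so there is nothing to compare against; yours is the standard textbook argument and it is sound.
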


\section{Equivalence of Polycyclic Codes with Special Associate Polynomials}
In this section, we are ready to analyze Conjecture 3.5 in \cite{ref1}. In fact, Conjecture 3.5 in \cite{ref1} itself is not correct, which can be seen in the counter-example Example 3.1. However, if we change the condition ``$\ord(t_1(x))=\ord(t_2(x))$" to ``$t_2(x)=t_1(0)^{-1}t_1^*(x)$", we can obtain a new conclusion (see Theorem 3.3 below). First, let us recall some claims from \cite{ref1}.

\begin{prop} [Proposition 5.2 in \cite{ref1}]
For any polynomial $f(x)$, $\deg (f(x))\neq \deg(f^*(x))$ if and only if $x\mid f(x)$.
\end{prop}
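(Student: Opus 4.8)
The plan is to prove the equivalence directly from the definition of the reciprocal polynomial, by tracking exactly which coefficient of $f$ becomes the leading coefficient of $f^*$. First I would fix notation: write $f(x)=\sum_{i=0}^{n}f_ix^i$ with $n=\deg(f)$ and $f_n\neq 0$, and let $m=\min\{i:f_i\neq 0\}$ be the order of vanishing of $f$ at the origin, so that $x^m\mid f(x)$ but $x^{m+1}\nmid f(x)$. By the definition of the reciprocal polynomial we have $f^*(x)=x^n f(1/x)=\sum_{i=0}^{n}f_{n-i}x^i$, which is just the coefficient string of $f$ written in reverse order. The statement should be read for nonzero $f$, since the zero polynomial has no well-defined degree or reciprocal.

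The key step is to read off $\deg(f^*)$ from this reversed list. The coefficient of $x^i$ in $f^*$ is $f_{n-i}$, and this is nonzero precisely when $n-i\geq m$, i.e. when $i\leq n-m$; moreover the coefficient of $x^{n-m}$ is exactly $f_m\neq 0$. Hence the leading term of $f^*$ is $f_m\,x^{n-m}$, so that $\deg(f^*)=n-m$, and therefore
\[
\deg(f)-\deg(f^*)=n-(n-m)=m.
\]

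With this identity in hand both directions follow immediately. Indeed $\deg(f)\neq\deg(f^*)$ holds exactly when $m>0$, which is equivalent to $f_0=0$, i.e.\ to $x\mid f(x)$; conversely $\deg(f)=\deg(f^*)$ holds exactly when $m=0$, i.e.\ $f(0)\neq 0$, i.e.\ $x\nmid f(x)$. There is essentially no serious obstacle here, as the argument reduces to bookkeeping of coefficients; the only point demanding a little care is the determination of the leading coefficient of $f^*$ when $f(0)=0$. One must observe that reversing the coefficient string promotes the lowest nonzero coefficient $f_m$ into the top slot, so it is precisely divisibility by $x$ (equivalently $m\geq 1$) that forces the degree of $f^*$ to drop below $\deg(f)$.
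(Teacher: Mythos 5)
Your proof is correct and complete. Note that the paper itself offers no proof of this proposition at all: it is simply imported as Proposition 5.2 of \cite{ref1}, so there is nothing to compare against, and your argument supplies exactly the routine verification one would want. The identity $\deg(f)-\deg(f^*)=m$, where $m$ is the order of vanishing of $f$ at $0$, is the right way to see both directions at once. One small imprecision: the claim that the coefficient $f_{n-i}$ of $x^i$ in $f^*$ is nonzero \emph{precisely} when $n-i\geq m$ is literally false when $f$ has internal zero coefficients (e.g.\ $f=x^3+x$ has $f_2=0$ although $2\geq m=1$); what you actually need, and what you do establish in the next clause, is that $f_{n-i}=0$ for all $i>n-m$ while the coefficient at $i=n-m$ equals $f_m\neq 0$, which pins down $\deg(f^*)=n-m$. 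That slip does not affect the validity of the argument.
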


Since we need $g(x)\mid x^n-ax^i-b$ with $b\neq 0$, using the result of Proposition 3.1, we have $\deg(g(x))=\deg(g^*(x))$.

\begin{thm} [Theorem 4.9 in \cite{ref1}]
Let $C$ be a polycyclic code, then its reversed code $C'$ is generated by the reciprocal polynomial of the generator polynomial of $C$. Furthermore, $C'$ is always equivalent to $C$ because of the isomorphism.
\end{thm}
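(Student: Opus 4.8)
The plan is to prove Theorem 3.2 in two parts, matching its two assertions. First I would establish the structural claim that $C'$ is generated by the reciprocal polynomial $g^*(x)$, and then I would show that the reversal operation is realized by a monomial transformation, so that $C'$ is equivalent to $C$ by Definition 2.4.

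For the first part, let $C$ be generated by $g(x)$ with $\deg g = n-k$, so that $C$ has the staircase generator matrix $G$ from Theorem 2.1, whose rows are the coordinate vectors of $g(x), x g(x), \dots, x^{k-1} g(x)$. Reversing a vector of length $n$ corresponds, at the polynomial level, to applying the map $c(x) \mapsto x^{n-1} c(1/x)$. The idea is to track what this map does to the generating set. Applying reversal to $x^{j} g(x)$ for $0 \le j \le k-1$ gives $x^{n-1} (1/x)^{j} g(1/x) = x^{n-1-j-(n-k)} \cdot x^{n-k} g(1/x) = x^{k-1-j} g^*(x)$, using the definition $g^*(x) = x^{\deg g} g(1/x)$ from Definition 2.6. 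As $j$ ranges over $0, \dots, k-1$, the exponent $k-1-j$ ranges over $0, \dots, k-1$, so the reversed rows are exactly $g^*(x), x g^*(x), \dots, x^{k-1} g^*(x)$, reindexed. Hence $C'$ is spanned by these shifts of $g^*(x)$, and since $g^*$ has the same degree $n-k$ as $g$ (this is where I invoke Proposition 3.1 together with $g \mid x^n - ax^i - b$ with $b \neq 0$, forcing $x \nmid g$ and thus $\deg g^* = \deg g$), the reversed matrix is again a staircase matrix of the required shape, identifying $C'$ as the polycyclic-type code generated by $g^*(x)$.

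For the second part, equivalence is almost immediate once the first part is in hand: reversal of coordinates $(c_0, \dots, c_{n-1}) \mapsto (c_{n-1}, \dots, c_0)$ is given by right-multiplication by the anti-diagonal permutation matrix $J$, which is a monomial matrix (one nonzero entry, namely $1$, in each row and column). Thus $G J$ is a generator matrix for $C'$, and since $J$ is monomial, $C$ and $C'$ satisfy the definition of equivalent codes. I would state this explicitly by exhibiting $J$ and checking it is monomial, so the ``because of the isomorphism'' phrase in the statement is made precise as this monomial (indeed permutation) equivalence.

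The main obstacle is the bookkeeping in the first part: one must verify that the reversed generating vectors genuinely assemble into a valid staircase generator matrix rather than merely spanning the right space, and in particular that the leading coefficient condition $g^*_{n-k} \neq 0$ of Theorem 2.1 holds. This is exactly the point where the hypothesis $b \neq 0$ is essential --- it guarantees via Proposition 3.1 that $\deg g^* = \deg g$, so that $g^*(0) \neq 0$ and the reciprocal polynomial does not drop degree. I would take care to note that without this, the claim could fail, which is why the theorem is stated in the trinomial setting with $f(0) \neq 0$. The remaining computations are routine reindexing, so I would not belabor them.
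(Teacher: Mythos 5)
Your argument is correct, and there is nothing in the paper to compare it against: this statement is imported verbatim as Theorem 4.9 of \cite{ref1} and the present paper gives no proof of it. Your computation $x^{n-1}(1/x)^j g(1/x) = x^{k-1-j}g^*(x)$ correctly identifies the reversed rows of the staircase generator matrix as the shifts of $g^*(x)$, and realizing the reversal as right-multiplication by the anti-diagonal permutation matrix $J$ gives the monomial equivalence required by Definition 2.4. Your care about the leading-coefficient condition is the one genuinely delicate point: as stated the theorem says ``let $C$ be a polycyclic code'' with no restriction, and if $g(0)=0$ then $\deg g^* < \deg g$ and the reversed rows no longer form a staircase matrix of the same shape (the identification of $C'$ as ``generated by $g^*$'' then needs more interpretation); in the trinomial setting with $b \neq 0$ this cannot happen, which is exactly how the result is used throughout the paper, so your proof covers every application made of it here.
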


\begin{thm} [Theorem 6.1 in \cite{ref1}]
$($same order$)$. Let trinomials $f_1(x)=x^n-ax^i-b$ and $f_2(x)=x^n-a'x^{n-i}-b'$, where $a'=-ab^{-1}$ and $b'=b^{-1}$, be mutually reciprocal. Then $\ord(f_1(x))=\ord(f_2(x))$.
\end{thm}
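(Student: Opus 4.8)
The plan is to reduce the whole statement to a single structural fact: the order of a polynomial with nonzero constant term is invariant under both nonzero scalar multiplication and the reciprocal operation. Once that is in hand, the theorem follows by observing that $f_2$ is nothing but a normalized reciprocal of $f_1$.

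First I would verify the exact algebraic relationship between $f_2$ and $f_1^*$ directly from the definition of the reciprocal polynomial. Computing $f_1^*(x)=x^n f_1(1/x)=1-ax^{n-i}-bx^n$ and normalizing to a monic polynomial by dividing through by $-b$ (legitimate since $b\neq 0$ for a genuine trinomial) gives $-b^{-1}f_1^*(x)=x^n+ab^{-1}x^{n-i}-b^{-1}=x^n-a'x^{n-i}-b'$ with $a'=-ab^{-1}$ and $b'=b^{-1}$. Hence $f_2(x)=-b^{-1}f_1^*(x)$, which is precisely what ``mutually reciprocal'' means here and simultaneously re-derives the stated values of $a'$ and $b'$.

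Next I would establish the two invariance properties. Scalar invariance is immediate: for $c\neq 0$ we have $cf\mid x^e-1$ if and only if $f\mid x^e-1$, so $\ord(cf)=\ord(f)$; in particular $\ord(f_2)=\ord(f_1^*)$. For invariance under reciprocation the key ingredients are the multiplicativity $(gh)^*=g^*h^*$, which follows by writing $x^{\deg g+\deg h}=x^{\deg g}x^{\deg h}$ and substituting $1/x$, together with the computation $(x^e-1)^*=-(x^e-1)$. Then, whenever $f_1\mid x^e-1$, writing $x^e-1=f_1 h$ and taking reciprocals yields $-(x^e-1)=f_1^*\,h^*$, so $f_1^*\mid x^e-1$; applying the same argument to $f_1^*$ and using $(f_1^*)^*=f_1$ gives the converse. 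Thus $f_1$ and $f_1^*$ divide exactly the same binomials $x^e-1$, whence $\ord(f_1)=\ord(f_1^*)$. Here one should note that $\ord$ is defined in the primary sense for both polynomials because $f_1(0)=-b\neq 0$, and by Proposition 3.1 the condition $x\nmid f_1$ guarantees $\deg f_1^*=\deg f_1$.

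Combining the two steps yields $\ord(f_2)=\ord(f_1^*)=\ord(f_1)$, as claimed. I expect the only delicate point to be the bookkeeping in the reciprocal identities, namely ensuring that degrees are preserved so that $(gh)^*=g^*h^*$ holds with no spurious factors of $x$ — a subtlety that is automatically controlled here by $f_1(0)\neq 0$. Once that is in place the remainder of the argument is purely formal.
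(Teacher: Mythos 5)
Your proof is correct. Note that the paper itself gives no proof of this statement --- it is imported verbatim as Theorem 6.1 of the cited reference --- so there is no in-paper argument to compare against; but your argument is exactly the standard one, and it is consistent with the machinery the paper does develop: your multiplicativity step $(gh)^*=g^*h^*$ is precisely the paper's Proposition 3.2, and your identification $f_2(x)=-b^{-1}f_1^*(x)=f_1(0)^{-1}f_1^*(x)$ matches the remark the paper makes immediately before its Theorem 3.3. All the delicate points are handled: $f_1(0)=-b\neq 0$ guarantees $\deg f_1^*=\deg f_1$ and $(f_1^*)^*=f_1$, the computation $(x^e-1)^*=-(x^e-1)$ is right, and scalar invariance of the order disposes of the normalizing factor $-b^{-1}$. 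The only cosmetic remark is that $(gh)^*=g^*h^*$ holds over a field with no degree caveat at all (leading coefficients are nonzero, so degrees always add), so the ``subtlety'' you flag at the end is automatic even without invoking $f_1(0)\neq 0$; that hypothesis is genuinely needed only for $(f_1^*)^*=f_1$ in the converse direction.
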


\begin{con} [Conjecture 3.5 in \cite{ref1}]
Let $t_1(x)=x^n-ax^i-b$, $t_2(x)=x^n-a'x^{n-i}-b'$ be such that $\ord(t_1(x))=\ord(t_2(x))$. Let $S_1$ be the set of all polycyclic codes of length $n$ over $\mathbb{F}_q$ associated with $t_1(x)$ and $S_2$ be the set of all polycyclic codes of length $n$ over $\mathbb{F}_q$ associated with $t_2(x)$. Then $S_1$ and $S_2$ are in a one-to-one correspondence where corresponding codes are equivalent to each other.
\end{con}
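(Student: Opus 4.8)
The plan is to refute this conjecture rather than to prove it, since the hypothesis $\ord(t_1)=\ord(t_2)$ is far too weak to support a structural bijection between $S_1$ and $S_2$. The order of a polynomial records only the least $e$ with $f\mid x^e-1$; it is a coarse invariant that says almost nothing about how $f$ factors into irreducibles. The genuinely correct hypothesis is the precise reciprocal relation $t_2=t_1(0)^{-1}t_1^*$ alluded to in the introduction (the one yielding Theorem 3.3), under which the factorizations of $t_1$ and $t_2$ are forced to mirror one another. I therefore expect that two trinomials of the prescribed shapes can share an order while having genuinely different divisor structure, and such a pair will break the claimed correspondence.

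To turn this into a disproof I would use the fact, implicit in Definition 2.2 and Theorem 2.1, that a polycyclic code associated with $f$ is an ideal of $\mathbb{F}_q[x]/\langle f\rangle$ whose generator polynomial is a monic divisor of $f$. Hence $|S_j|$ equals the number of monic divisors of $t_j$, and if $t_j=\prod_k p_{j,k}^{\,e_{j,k}}$ is the factorization into distinct monic irreducibles then $|S_j|=\prod_k(e_{j,k}+1)$. A one-to-one correspondence forces $|S_1|=|S_2|$; moreover, since corresponding codes must be equivalent, the correspondence must preserve dimension, hence preserve the multiset $\{\,n-\deg g : g\mid t_j,\ g \text{ monic}\,\}$. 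So it suffices to exhibit $t_1=x^n-ax^i-b$ and $t_2=x^n-a'x^{n-i}-b'$ with $\ord(t_1)=\ord(t_2)$ but with either unequal divisor counts or unequal dimension multisets.

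The search itself is the routine part: fix a small field and length, list the admissible trinomials (those with $a,b\neq 0$, equivalently $f(0)\neq 0$ together with a genuine middle term), compute each order directly from Definition 2.4 as the least $e$ with $f\mid x^e-1$, bucket the trinomials by order, and inside a common-order bucket hunt for a pair whose factorization types differ. The cleanest witness will be a bucket containing one trinomial that splits into distinct irreducibles alongside a partner (middle term in degree $n-i$) that carries a repeated irreducible factor, since then the counts $\prod_k(e_{j,k}+1)$ are already unequal and no bijection can exist.

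The one real obstacle is the choice of field. Over $\mathbb{F}_2$ the prescribed forms leave no coefficient freedom: a genuine trinomial must be $x^n+x^i+1$, its partner $x^n+x^{n-i}+1$ is exactly the reciprocal $t_1^*$, and the orders already coincide by Theorem 3.2; there the hypothesis silently upgrades to the reciprocal relation and the conjecture is in fact true as a special case of Theorem 3.3. A genuine counterexample must therefore live over a larger field (the smallest candidates being $\mathbb{F}_3$ or a small extension), where for fixed $n,i$ one can pick $(a',b')$ different from the reciprocal values $(-ab^{-1},\,b^{-1})$ yet still matching $\ord(t_1)$. Producing one such pair with mismatched divisor data completes the disproof, and I expect this mismatch of invariants, not any delicate argument about equivalence, to be where the conjecture actually fails.
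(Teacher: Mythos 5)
Your proposal is correct and takes essentially the same route as the paper: the paper also refutes the conjecture by a counterexample built on counting monic divisors, exhibiting $t_1(x)=x^{10}+x^8+1$ and $t_2(x)=x^{10}+x^2+2$ over $\mathbb{F}_3$, both of order $156$, with $18$ versus $12$ divisors (found by Magma, exactly the bucket-and-compare search you describe). The only thing your writeup lacks is the explicit witness pair, but your reasoning for why it must exist over a field larger than $\mathbb{F}_2$ and why unequal divisor counts kill any bijection matches the paper's argument.
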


Now we give a counter example of Conjecture 3.1.

\begin{ex}
	Consider $t_1(x)= x^{10}+x^8+1$ and $t_2(x)=x^{10}+x^2+2$ over $\mathbb{F}_3$. As $\ord(t_1(x))=\ord(t_2(x))=156$, and
	\[\begin{aligned}
		t_1(x) &= (x+1)^2(x+2)^2(x^6+2x^2+1),\\
		t_2(x) &= (x^2+1)^2(x^3+x^2+x+2)(x^3+2x^2+x+1).\\
	\end{aligned}\]
	It is clear that $S_1$ contains 18 linear codes, while $S_2$ contains 12 linear codes. Therefore, no mapping exists to satisfy the requirement since the mapping need to be one-to-one. (The data can be obtained by Magma.)
\end{ex}

\begin{prop}
	$f(x)g(x)=h(x)$ if and only if $f^*(x)g^*(x)=h^*(x)$, where $f(0)\neq 0$ and $g(0)\neq 0$.
\end{prop}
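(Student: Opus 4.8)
The plan is to reduce both implications to a single multiplicativity property of the reciprocal operation, namely that $(pq)^*(x) = p^*(x)\,q^*(x)$ for any nonzero $p,q \in \mathbb{F}_q[x]$. First I would establish this property. Writing $m = \deg p$ and $\ell = \deg q$, and using that $\mathbb{F}_q$ is a field so that the leading coefficients of $p$ and $q$ cannot cancel, I have $\deg(pq) = m + \ell$. Hence $(pq)^*(x) = x^{m+\ell}(pq)(1/x) = \bigl(x^m p(1/x)\bigr)\bigl(x^\ell q(1/x)\bigr) = p^*(x)q^*(x)$, where each bracketed factor is exactly the definition of the corresponding reciprocal polynomial. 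This identity is the engine of both directions.

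For the forward implication, suppose $f(x)g(x) = h(x)$. Applying the multiplicativity property directly gives $h^*(x) = (fg)^*(x) = f^*(x)g^*(x)$, as desired; here $h(0) = f(0)g(0) \neq 0$ is automatic.

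For the converse, suppose $f^*(x)g^*(x) = h^*(x)$. I would apply the reciprocal operation once more to both sides. On the left, multiplicativity (now applied to the pair $f^*, g^*$, whose leading coefficients are again nonzero) together with the involution $(p^*)^* = p$ yields $(f^*g^*)^* = (f^*)^*(g^*)^* = f(x)g(x)$; the involution is valid precisely because $f(0)\neq0$ and $g(0)\neq0$, so that $\deg f^* = \deg f$ and $\deg g^* = \deg g$ and no low-order information is lost. On the right, $(h^*)^* = h$ by the same involution. Combining, $f(x)g(x) = h(x)$.

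The main point to watch, and the only genuine obstacle, is the bookkeeping of constant terms and degrees that legitimizes the involution $(p^*)^* = p$. This identity fails when $p(0)=0$: then $\deg p^* < \deg p$ and applying the reciprocal twice recovers $p$ only up to a factor $x^\nu$, where $\nu$ is the multiplicity of $0$ as a root of $p$. Thus in the converse direction the hypotheses $f(0)\neq0$, $g(0)\neq0$ are indispensable, since they give $f^*(0),g^*(0)\neq0$, hence $\deg(f^*g^*)=\deg f+\deg g$ and $(f^*)^*(g^*)^*=fg$; and one must also know $h(0)\neq0$ so that $(h^*)^*=h$. The latter holds automatically in the forward direction (as $h=fg$ with $f(0)g(0)\neq0$) and is the standing assumption in the applications of this paper, where $h$ divides a trinomial $x^n-ax^i-b$ with $b\neq0$. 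Once these constant-term conditions are in place, both implications follow mechanically from the two displayed identities.
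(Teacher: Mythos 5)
Your proof is correct and follows essentially the same route as the paper's: the forward direction via the substitution $x\mapsto 1/x$ followed by multiplication by $x^{\deg h}$, and the converse by applying the reciprocal operation once more and invoking the involution $(p^*)^*=p$. Your explicit remark that the converse also requires $h(0)\neq 0$ (otherwise $h(x)=x\,f(x)g(x)$ would satisfy $h^*=f^*g^*$ without $h=fg$) is a point the paper's one-line converse silently assumes, and it is worth keeping since in the paper's applications $h$ is always a trinomial with nonzero constant term.
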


\begin{proof}
	Let $\deg(f(x))=k,\deg(g(x))=n-k,\deg(h(x))=n$, we have
	\[\begin{aligned}
		f(x)g(x)&=h(x),\\
		x^kf(1/x)\cdot x^{n-k}g(1/x)&=x^{n}h(1/x),\\
		f^*(x)g^*(x)&=h^*(x).\\
	\end{aligned}\]
	Since $h(x)=(h^*(x))^*=(f^*(x)g^*(x))^*=(f^*(x))^*(g^*(x))^*=f(x)g(x)$.
\end{proof}

Note that $\ord(t_1(x))=\ord(t_2(x))$ if $t_2(x)=t_1(0)^{-1}t_1^*(x)$ in Theorem 3.2, thus the following theorem (Theorem 3.3) is a special case of Conjecture 3.5. By using Proposition 3.2, if $g(x)$ is a factor of $h(x)$, then $g^*(x)$ is a factor of $h^*(x)$, which means $h(x)$ and $h^*(x)$ have the same number of factors, and combined with Proposition 3.1, $h(x)$ and $h^*(x)$ have the same number of factors of the same degree.

\begin{thm}
	Let $t_1(x)=x^n-ax^i-b$, $t_2(x)=x^n-a'x^{n-i}-b'$ be such that $t_2(x)=t_1(0)^{-1}t_1^*(x)$. Let $S_1$ be the set of all polycyclic codes of length $n$ over $\mathbb{F}_q$ associated with $t_1(x)$ and $S_2$ be the set of all polycycliccodes of length $n$ over $\mathbb{F}_q$ associated with $t_2(x)$. Then $S_1$ and $S_2$ are in a one-to-one correspondence where corresponding codes are equivalent to each other.
\end{thm}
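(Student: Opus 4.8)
The plan is to realize the correspondence $\varphi\colon S_1\to S_2$ by the coordinate-reversal map $R(c_0,c_1,\dots,c_{n-1})=(c_{n-1},\dots,c_1,c_0)$, which by Theorem 3.1 sends a code with generator $g(x)$ to one generated by the (normalized) reciprocal $g^*(x)$. Concretely I would set $\varphi(C)=R(C)$ and verify three things in turn: (i) $R(C)\in S_2$ whenever $C\in S_1$, so $\varphi$ is well defined; (ii) $\varphi$ is a bijection; and (iii) each $C$ is equivalent to $\varphi(C)$.

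For the well-definedness I would argue with the polycyclic shift itself rather than with generator polynomials. Writing $\rho_j$ for multiplication by $x$ in $\mathbb{F}_q[x]/\langle t_j\rangle$ (that is, the right polycyclic shift of Definition 2.1 attached to $t_j$), the content of (i) is that $R$ intertwines the two shift structures. The key computation is to evaluate $R\circ\rho_2\circ R$ on the standard basis $e_0,\dots,e_{n-1}$ (with $e_j\leftrightarrow x^j$): for $j\ge 1$ it sends $e_j\mapsto e_{j-1}$, while the wrap-around basis vector is governed by $\rho_2(e_{n-1})=b'e_0+a'e_{n-i}$, whence $R\rho_2R(e_0)=b'e_{n-1}+a'e_{i-1}$. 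Using exactly the relations $a'=-ab^{-1}$ and $b'=b^{-1}$ coming from $t_2=t_1(0)^{-1}t_1^*$, this matches the action of $\rho_1^{-1}$, the inverse shift given by $x^{-1}=b^{-1}(x^{n-1}-ax^{i-1})$ in $\mathbb{F}_q[x]/\langle t_1\rangle$; that is, $R\rho_2R=\rho_1^{-1}$. Since $b\neq 0$ forces $x$ to be a unit modulo $t_1$, the inclusion $\rho_1(C)\subseteq C$ upgrades to $\rho_1(C)=C$ and hence $\rho_1^{-1}(C)=C$; therefore $\rho_2(R(C))=R\rho_1^{-1}(C)=R(C)$, proving that $R(C)$ is right $t_2$-polycyclic and lies in $S_2$.

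Bijectivity then comes essentially for free: $R$ is an involution, $R^2=\mathrm{id}$, and the hypothesis is symmetric, since applying the reciprocal twice gives $t_2(0)^{-1}t_2^*=t_1$ (a one-line check using $(cp)^*=cp^*$ for a nonzero scalar $c$ and the fact that $t_1(0)\,t_2(0)=(-b)(-b^{-1})=1$). Thus the identical argument with the roles of $t_1$ and $t_2$ interchanged produces a map $S_2\to S_1$ inverse to $\varphi$. For the equivalence in (iii) I would simply note that $R$ is realized by the reversal permutation matrix, which is in particular a monomial matrix, so that Definition 2.3 makes $C$ and $\varphi(C)=R(C)$ equivalent. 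As a consistency check at the level of generators, Propositions 3.1 and 3.2 show that $g\mapsto g(0)^{-1}g^*$ is a degree-preserving bijection between the monic divisors of $t_1$ and those of $t_2$, which matches the one-to-one correspondence of codes and in particular gives $\dim\varphi(C)=\dim C$.

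I expect the main obstacle to be step (i), specifically the verification that reversal intertwines the two polycyclic shifts, i.e.\ the identity $R\rho_2R=\rho_1^{-1}$. This is precisely where the reciprocal relationship $a'=-ab^{-1},\ b'=b^{-1}$ must be used: an arbitrary pair of trinomials of equal order will not satisfy it, which is exactly why the order hypothesis of Conjecture 3.1 is too weak (Example 3.1) while the reciprocal hypothesis of Theorem 3.3 succeeds. Care is also needed at the boundary index $j=0$, where the wrap-around term appears, and in confirming that $x$ is a unit modulo each $t_j$, which rests on $b\neq 0$ (equivalently $x\nmid t_j$, so that degrees are preserved by Proposition 3.1).
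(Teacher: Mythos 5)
Your proposal is correct, and the underlying correspondence is the same one the paper uses: $\varphi(C)=C'=R(C)$, coordinate reversal. Where you differ is in how well-definedness and bijectivity are established. The paper argues entirely at the level of generator polynomials: by Proposition 3.2 every divisor $g$ of $t_1$ yields the divisor $g(0)^{-1}g^*$ of $t_2$, and it invokes Theorem 3.1 (quoted from the reference) to assert that $C'$ is generated by the normalized reciprocal of the generator of $C$ and is equivalent to $C$; injectivity and surjectivity are then checked by comparing generators. You instead prove the operator identity $R\rho_2R=\rho_1^{-1}$ on the standard basis (your index computations $R\rho_2R(e_0)=b'e_{n-1}+a'e_{i-1}$ and $x^{-1}=b^{-1}(x^{n-1}-ax^{i-1})$ do match under $a'=-ab^{-1}$, $b'=b^{-1}$), use the invertibility of $\rho_1$ coming from $b\neq 0$ to upgrade $\rho_1(C)\subseteq C$ to $\rho_1^{-1}(C)=C$, get bijectivity from $R^2=\mathrm{id}$ together with the symmetry $t_2(0)^{-1}t_2^*=t_1$, and get equivalence from $R$ being a permutation, hence monomial, matrix. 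Your route is more self-contained: it does not presuppose that the reversed code of a $t_1$-polycyclic code is polycyclic precisely with respect to $t_1(0)^{-1}t_1^*$ (a point the paper delegates to the cited theorem), and it isolates exactly where the reciprocal relation is needed, which explains why the weaker same-order hypothesis of the original conjecture fails. The paper's route is shorter and makes the generator polynomial of the image code explicit, which is what the later sections actually use.
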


\begin{proof}
	By Proposition 3.2, if $g(x)$ is a factor of $t_1(x)$, then $g^*(x)$ is a factor of $t_2(x)$, and $\deg(g(x))=\deg(g^*(x))$. If the generator of a polycyclic code $C$ is $g(x)$, then $g(0)^{-1}g^*(x)$ is the generator of $C'$. Based on the above facts, define the mapping
	\[\begin{aligned}
	\varphi:\; S_1 &\rightarrow S_2,\\
		C &\mapsto C'.\\
	\end{aligned}\]
$\varphi$ satisfies:\: 1) $\varphi$ is an injection. For $\: C_1,C_2\in S_1$ and $C_1\neq C_2$, let the generators of $C_1$, $C_2$ be $f(x)$, $g(x)$, respectively. If $\varphi(C_1)=\varphi(C_2)$, then $f(0)^{-1}f^*(x)=g(0)^{-1}g^*(x)$, which means $f(0)^{-1}f(x)=g(0)^{-1}g(x)$ and $C_1=C_2$, which is a contradiction. \: 2) $\varphi$ is surjective. For $\:C \in S_2$, let the generator of $C$ be $g(x)$. Since $g(x)\mid t_2(x)$, then $g(0)^{-1}g^*(x)\mid t_1(x)$, which means that there exists $C_0=C'\in S_1$ with the generator $g(0)^{-1}g^*(x)$, i.e. $\varphi^{-1}(C)=C'$.\: 3) $C$ is equivalent to $\varphi(C)=C'$, which is the result of Theorem 3.1.
\end{proof}
\begin{ex}
Let $t_1(x)=x^5+x^4+1$ and $t_2(x)=x^5+x+1$ over $\mathbb{F}_2$, and
\[\begin{aligned}
t_1(x)&=(x^2 + x + 1)(x^3 + x + 1)=f_1(x)f_2(x),\\
t_2(x)&=(x^2 + x + 1)(x^3 + x^2 + 1)=f_1^*(x)f_2^*(x).\\
\end{aligned}\]
Therefore, $S_1=\{C_1,C_2,C_3,C_4\}$, where the generator of $C_1,C_2,C_3,C_4$ are $1,f_1(x),f_2(x),t_1(x)$, respectively. And $S_2=\{C_1',C_2',C_3',C_4'\}$.
\end{ex}
\section{A way to Construct Isodual Polycyclic Codes}

In this section, Conjecture 4.1, 4.2 and 4.5 in \cite{ref1} are studied. Before we discuss these conjectures, here is a useful proposition.

\begin{prop}
Over $GF(q)$, where $q$ is a power of $2$. If a polycyclic code $C$ associated with trinomial $x^{2m}+a^2x^{2i}+b^2$ and generated by $g(x)=x^m+ax^i+b$ is isodual, then the polycyclic code $C_1$ generated by $g_1(x)=x^{km}+ax^{ki}+b\mid x^{2km}+a^2x^{2ki}+b^2$ is also isodual, where $k$ is a positive integer.
\end{prop}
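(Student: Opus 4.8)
The plan is to exploit the substitution $x \mapsto x^k$, under which every object in the statement is the image of the corresponding object for $C$. Since $q$ is a power of $2$, the Frobenius (``freshman's dream'') identity gives $g(x)^2 = x^{2m}+a^2x^{2i}+b^2$ and $g_1(x)^2 = x^{2km}+a^2x^{2ki}+b^2$, so the associate polynomial of $C$ is $f(x):=g(x)^2$, the associate polynomial of $C_1$ is exactly $f(x^k)=g(x^k)^2$, and $g_1(x)=g(x^k)$; in particular $g_1(x)\mid x^{2km}+a^2x^{2ki}+b^2$. Thus $C$ is a $[2m,m]$ code and $C_1$ is a $[2km,km]$ code, both half-rate, as any isodual code must be. First I would record these identities and note, via Theorem 2.1, that a generator matrix of $C$ has rows given by the coefficient vectors of $x^s g(x)$ for $0\le s<m$, while a generator matrix of $C_1$ has rows given by $x^t g_1(x)$ for $0\le t<km$; the degree bounds $\deg(x^s g(x))\le 2m-1$ and $\deg(x^t g_1(x))\le 2km-1$ guarantee that no reduction modulo the associate polynomial is needed, so these are honest coordinate vectors.

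The core structural step is to show that $C_1$ is permutation-equivalent to the $k$-fold direct sum $C^{\oplus k}$ (the direct sum of $k$ copies of $C$). Writing each row index $t$ uniquely as $t = ks+r$ with $0\le s<m$ and $0\le r<k$, one computes $x^t g_1(x) = x^r\bigl(x^s g(x)\bigr)\big|_{x\mapsto x^k}$. The operation $(\cdot)\big|_{x\mapsto x^k}$ spreads the coefficient vector of $x^s g(x)$ by inserting $k-1$ zeros between consecutive entries, placing its $j$-th coordinate at position $kj$, and the factor $x^r$ shifts this into the arithmetic progression of positions $\equiv r \pmod{k}$. Hence the row indexed by $(s,r)$ is supported entirely in the residue class $r$ modulo $k$ and, read along that class, coincides with the $s$-th generator row of $C$. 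Grouping the $2km$ coordinates by residue class mod $k$ therefore transforms the generator matrix of $C_1$ into a block-diagonal matrix with $k$ identical blocks, each a generator matrix of $C$; this exhibits a coordinate permutation $\pi$ with $C_1 = C^{\oplus k}P_\pi$. I expect this decomposition to be the main obstacle, since it requires tracking supports carefully and verifying that the degree bounds prevent any wrap-around under the two modular reductions.

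Finally I would transfer isoduality through this equivalence using three facts: the dual of a direct sum is the direct sum of duals, so $(C^{\oplus k})^\perp = (C^\perp)^{\oplus k}$; a coordinate permutation satisfies $P_\pi^{T}=P_\pi^{-1}$, so $(D P_\pi)^\perp = D^\perp P_\pi$; and a direct sum of monomially equivalent codes is monomially equivalent via a block-diagonal monomial matrix. Concretely, from $C_1 = C^{\oplus k}P_\pi$ I get $C_1^\perp = (C^\perp)^{\oplus k} P_\pi$. The hypothesis that $C$ is isodual yields a monomial matrix $M$ sending $C$ to $C^\perp$, whence $(C^\perp)^{\oplus k} = C^{\oplus k}\,\widetilde M$ with $\widetilde M = \mathrm{diag}(M,\dots,M)$ monomial. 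Substituting $C^{\oplus k} = C_1 P_\pi^{-1}$ gives $C_1^\perp = C_1\,(P_\pi^{-1}\widetilde M P_\pi)$, and since $P_\pi^{-1}\widetilde M P_\pi$ is again a monomial matrix, $C_1$ is equivalent to $C_1^\perp$; that is, $C_1$ is isodual, as desired.
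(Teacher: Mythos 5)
Your proof is correct and rests on the same key observation as the paper's: $C_1$ is the $k$-fold interleaving of $C$, which the paper expresses by replacing each entry of $G$ and of the monomial matrix $M$ with a $k\times k$ scalar block (i.e.\ tensoring with $I_k$), while you conjugate by the shuffle permutation to exhibit $C_1$ as a permuted direct sum $C^{\oplus k}$. Your version is somewhat more careful about transferring duality (via $(C^{\oplus k})^\perp=(C^\perp)^{\oplus k}$ and $P_\pi^T=P_\pi^{-1}$), but the two arguments are essentially the same.
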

\begin{proof}
In fact, the generator matrix of $C$ is
\[G=\begin{pmatrix}
b &\cdots &a &\cdots &1 &{} &{}\\
{} &\ddots  &{} &\ddots  &{} &\ddots &{} \\
{} &{} &b  &\cdots &a &\cdots &1\\
\end{pmatrix}_{m\times 2m}.
\]
By changing $b$ to $bI_k$, $a$ to $aI_k$, $1$ to $I_k$ and $0$ to $O_k$, we get a new matrix $G_1$ which is the generator matrix of $C_1$. Here, $O_k$ is the all-zero matrix and $I_k$ is the identity matrix. Let $H$ and $H_1$ be the parity-check matrices of $C$ and $C_1$, respectively, then if there exists a monomial matrix $M$ satisfing $GM=H$, then the matrix $M_1$ also satisfies $G_1M_1=H_1$, where $M_1$ is obtained by replacing the elements $r$ in $M$ with $rI_k$, and $M_1$ is also monomial.
\end{proof}

Now we discuss Conjecture 4.5 in \cite{ref1}, which consists of Theorem 4.1, 4.2 and 4.3.

\begin{thm}
Over $GF(q)$, where $q$ is a power of an odd prime. A polycyclic code $C$ associated with trinomial $x^{2m}\pm 2cx^m+c^2$ and generated by $g(x)= x^m\pm c$ is isodual.
\end{thm}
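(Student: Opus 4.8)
The plan is to compute an explicit generator matrix for $C$, read off a generator matrix for its dual $C^{\perp}$, and then exhibit a single monomial matrix carrying one onto the other. First I would apply Theorem 2.1 to the generator polynomial $g(x)=x^{m}\pm c$: here $n=2m$, the dimension is $k=n-\deg g=m$, and the only nonzero coefficients are $g_{0}=\pm c$ and $g_{m}=1$. Writing the resulting $m\times 2m$ matrix in block form, every row places $\pm c$ in the left block and $1$ in the right block at the same relative position, so
\[
G=\bigl(\pm c\,I_{m}\;\big|\;I_{m}\bigr),
\]
with $I_m$ the $m\times m$ identity. Because the right block is already the identity, $G$ is in (reversed) systematic form, which makes the dual immediate.

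Next I would produce a generator matrix of $C^{\perp}$. From the systematic shape of $G$ one takes
\[
H=\bigl(I_{m}\;\big|\;\mp c\,I_{m}\bigr),
\]
and verifies $GH^{T}=(\pm c\,I_m)I_m+I_m(\mp c\,I_m)=O_m$; since $\dim C+\operatorname{rank}H=m+m=2m=n$, the matrix $H$ indeed generates all of $C^{\perp}$.

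Finally I would establish the equivalence $C\sim C^{\perp}$ by constructing a monomial matrix $M$ with $GM=H$. Let $P$ be the permutation swapping coordinate $j$ with coordinate $j+m$ for $0\le j\le m-1$ (interchanging the two blocks), and let $D=\operatorname{diag}(I_{m},-I_{m})$. Then $GP=(I_{m}\mid \pm c\,I_{m})$, and multiplying the right block by $-1$ gives $G(PD)=(I_{m}\mid \mp c\,I_{m})=H$. Since $M=PD$ is a product of monomial matrices it is itself monomial, so by the definition of equivalence $C$ is isodual; note that the same $M$ handles both sign choices at once. I expect the only real care needed to be the sign bookkeeping in passing to $C^{\perp}$ and in choosing $D$, together with confirming $GH^{T}=O_m$; the remainder is routine block arithmetic.
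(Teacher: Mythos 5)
Your proposal is correct and follows essentially the same route as the paper: write $G=(\pm c\,I_m\mid I_m)$, read off a generator matrix of the dual from the systematic form, and exhibit an explicit monomial matrix relating the two. The only cosmetic difference is your choice of monomial matrix (a block swap composed with a sign change) versus the paper's diagonal matrix $\operatorname{diag}(-c^2I_m,\,I_m)$ applied to its normalization $H=((\mp c^{-1})I_m\mid I_m)$ of the same dual code.
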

\begin{proof}
Since $g(x)=x^{m}\pm c$, and $g(x)$ is the generator polynomial of $C$, we have a generator matrix $G$ of the form
\[G=(\pm cI_m\mid I_m),
\]
where $I_m$ is an identity matrix. Since $(\pm c^{-1})G=( I_m\mid(\pm c^{-1})I_m)$, then the parity-check matrix of $C$ is $H=((\mp c^{-1})I_m\mid I_m)$. We can define
\[M=\begin{pmatrix}
-c^2I_m &{} \\
{}   &I_m \\
\end{pmatrix},\]
and $M$ is a monomial matrix, which satisfies $HM=G$, and $C$ is isodual.
\end{proof}

\begin{thm}
Over $GF(q)$, where $q$ is a power of 2. A polycyclic code $C$ associated with trinomial $x^{2m}+a^2x^{2i}+b^2$ and generated by $g(x)=x^m+ax^i+b$ is isodual, where $i\mid m$.
\end{thm}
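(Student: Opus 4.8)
The plan is to exploit the characteristic-$2$ Frobenius identity together with Proposition 4.1 to reduce to a single base case, and then to verify that base case by an explicit monomial matrix. First I would record the structural observation that in characteristic $2$ the squaring map is additive, so
\[
(x^m + ax^i + b)^2 = x^{2m} + a^2 x^{2i} + b^2,
\]
i.e. $f(x) = g(x)^2$. Hence $g(x)\mid f(x)$, the code $C$ is genuinely defined, it has length $2m$ and dimension $m$, and the goal is to produce a monomial matrix $M$ with $C^{\perp} = \{\,\vec{c}M : \vec{c}\in C\,\}$, which is exactly the assertion that $C$ is isodual.

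The condition $i\mid m$ is what enables a reduction. Writing $m=ki$, I would apply Proposition 4.1 to the base trinomial $x^{2k}+a^2x^2+b^2$ with base generator $x^k+ax+b$ and multiplier $i$: its conclusion is precisely that the code generated by $x^{ik}+ax^{i}+b = x^m+ax^i+b$, associated with $x^{2m}+a^2x^{2i}+b^2$, is isodual provided the base code is. Since a genuine trinomial forces $i<m$, we have $k\ge 2$, and the base generator $x^k+ax+b$ is again a genuine trinomial with $a,b\neq 0$. Thus it suffices to prove the single base statement: the polycyclic code $C_0$ of length $2k$ generated by $g_0(x)=x^k+ax+b$ is isodual.

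For the base case I would write the generator matrix as $G=[\,A\mid B\,]$ with $A,B\in\mathbb{F}_q^{k\times k}$, where $A=bI_k+aU$ is upper bidiagonal ($U$ the nilpotent shift with $U_{j,j+1}=1$) and $B=I_k+aE_{k-1,0}$ carries the single wrap-around entry. A parity-check matrix of $C_0$ (equivalently, a generator matrix of $C_0^{\perp}$) is obtained from the orthogonality relations $b h_j + a h_{j+1} + h_{j+k}=0$ for $0\le j\le k-1$, which express the last $k$ coordinates of a dual codeword through the recursion $h_{j+k}=b h_j + a h_{j+1}$; solving this yields $H=[\,I_k\mid R\,]$, with $R$ lower bidiagonal (diagonal $b$, subdiagonal $a$) together with two extra entries $R_{0,k-1}=ab$ and $R_{1,k-1}=a^2$ coming from the wrap-around. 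I would then exhibit the monomial matrix $M$ taking $C_0$ to $C_0^{\perp}$: it is essentially the coordinate reversal composed with a block swap and a diagonal scaling, reversal being the natural map that turns the upper-bidiagonal block $A$ into the lower-bidiagonal block $R$, after which I check that $GM$ and $H$ have the same row space.

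The main obstacle will be this last step. The naive reversal-and-scaling monomial has scalars involving the quantity $b+a^2$ (visible already for $k=2$, where $R=\begin{pmatrix} b & ab \\ a & b+a^2\end{pmatrix}$), so it fails to be monomial exactly when $b=a^2$; in that degenerate sub-case a pure coordinate permutation works instead. Thus the delicate points are handling the corner entries of $R$ produced by the polycyclic wrap-around, verifying that all diagonal scalars of $M$ are nonzero (treating the sub-case $b=a^2$ separately), and confirming the row-space equality between $GM$ and $H$ uniformly in $k$. By contrast, the binomial situation of Theorem 4.1 had no wrap-around term and so admitted the transparent monomial used there.
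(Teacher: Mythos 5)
Your reduction is exactly the paper's: in characteristic $2$ the Frobenius identity gives $f(x)=g(x)^2$, and Proposition 4.1 applied with base generator $x^k+ax+b$ (where $m=ki$) reduces the theorem to the single case $i=1$. Your setup of the base case is also sound: the generator matrix $G=[\,A\mid B\,]$ with $A=bI_k+aU$, and the dual generator $[\,I_k\mid R\,]$ obtained from the recursion $h_{j+k}=bh_j+ah_{j+1}$, including the wrap-around entries $ab$ and $a^2$ in the last column of $R$, all check out (for $k=2$ your $R$ is the inverse transpose of the paper's block $A$, consistent with its $H=(A^T\mid I_k)$).

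The gap is that the base case itself --- which is the entire mathematical content of the theorem --- is not proved. You name a candidate monomial map (coordinate reversal composed with a block swap and a diagonal scaling), then observe yourself that its scalars involve $b+a^2$ and so it ``fails to be monomial exactly when $b=a^2$,'' and you defer both that degenerate sub-case and the verification of the row-space equality ``uniformly in $k$.'' This is precisely the step that cannot be waved at: the quantity $b^{k-1}+a^k$ (your $b+a^2$ for $k=2$) must end up as a matrix \emph{entry} of the transformed blocks, not as a scaling factor of the monomial matrix, and your proposed reversal-plus-scaling does not achieve that. The paper's proof handles this by a different and more delicate sequence of operations: row operations clearing the negative powers of $b$ from the standard forms $(I_k\mid A)$ and $(A^T\mid I_k)$, division of each column by its greatest common monomial factor (a power of $a$ or $b$, hence always invertible --- no case split on $b=a^2$ is ever needed), and row/column reversals fixing the first row and column, after which the two resulting blocks $A_4$ and $B_2$ coincide identically. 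Without carrying out some such computation and exhibiting a monomial matrix that works for all $k$ and all nonzero $a,b$, the proposal establishes only the (easy) reduction and leaves the theorem unproved.
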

\begin{proof}
Let $m=ki$, according to Propsition 4.1, we consider the polycyclic code generated by $g(x)=x^k+ax+b$. Using row operations, we can change $G$ to the following standard form
\begin{align*}
	G&=\left(\begin{array}{cccc|cccc}
		1&0&\cdots&0\,&\,b^{-1}+(b^{-1}a)^k&b^{-2}a&\cdots&b^{-1}(b^{-1}a)^{k-1}\\
		0&1&\cdots&0\,&\,(b^{-1}a)^{k-1}&b^{-1}&\cdots&b^{-1}(b^{-1}a)^{k-2}\\
		\vdots&\vdots&\ddots&\vdots\,&\,\vdots&\vdots&\ddots&\vdots\\
		0&0&\cdots&1\,&\,b^{-1}a&0&\cdots&b^{-1}
	\end{array}
	\right)=(I_k\mid A).\end{align*}
Then the parity-check matrix of $C$ is
\begin{align*}
	H&=\left(\begin{array}{cccc|cccc}
		b^{-1}+(b^{-1}a)^k&(b^{-1}a)^{k-1}&\cdots&b^{-1}a\,&\,1&0&\cdots&0\\
		b^{-2}a&b^{-1}&\cdots&0\,&\,0&1&\cdots&0\\
		\vdots&\vdots&\ddots&\vdots\,&\,\vdots&\vdots&\ddots&\vdots\\
		b^{-1}(b^{-1}a)^{k-1}&b^{-1}(b^{-1}a)^{k-2}&\cdots&b^{-1}\,&\,0&0&\cdots&1
	\end{array}
	\right)=(A^T\mid I_k).
\end{align*}

We give the following four steps:

Step 1. Reducing the power of $b$ to nonnegative in $A$ and $A^T$ by doing row operations, respectively, we gain $A_1$, $B_1$.

\[A_1=
\begin{pmatrix}
	\begin{array}{cccccc}
		b^{k-1}+a^k & b^{k-2}a&b^{k-3}a^2&\cdots&ba^{k-2}& a^{k-1}\\
		a^{k-1}& b^{k-2} &b^{k-3}a&\cdots & ba^{k-3}&a^{k-2}\\
		a^{k-2}& 0&b^{k-3}&\cdots&ba^{k-4}&a^{k-3}\\
		\vdots&\vdots&\vdots& \ddots &\vdots& \vdots\\
		a^2&0&0&\cdots &b&a\\
		a & 0 &0&\cdots &0& 1
	\end{array}
\end{pmatrix},
\]
\[
B_1=\begin{pmatrix}
	\begin{array}{cccccc}
		b^{k-1}+a^k & ba^{k-1}&b^{2}a^{k-2} &\cdots&b^{k-2}a^2 & b^{k-1}a\\
		a & b & 0&\cdots& 0&0\\
		a^2 &ba&b^2&\cdots &0&0\\
		\vdots & \vdots&\vdots &\ddots& \vdots&\vdots\\
		a^{k-2}&ba^{k-3}&b^2a^{k-4}&\cdots&b^{k-2}&0\\
		a^{k-1} & ba^{k-2}&b^2a^{k-3}&\cdots &b^{k-2}a&b^{k-1}
	\end{array}
\end{pmatrix}.
\]

Step 2. Swaping the $i$-th\:($i\geq 2$) row of $A_1$ with the $(k-i+2)$-th row of $A_1$, we gain $A_2$.
\[
A_2=\begin{pmatrix}
	\begin{array}{cccccc}
		b^{k-1}+a^k&b^{k-2}a&b^{k-3}a^2&\cdots&ba^{k-2}& a^{k-1}\\
		a&0&0&\cdots&0&1\\
		a^2&0&0&\cdots &b&a\\
		\vdots&\vdots&\vdots& \ddots &\vdots& \vdots\\
		a^{k-2}& 0&b^{k-3}&\cdots&ba^{k-4}&a^{k-3}\\
		a^{k-1}& b^{k-2} &b^{k-3}a&\cdots & ba^{k-3}&a^{k-2}
	\end{array}
\end{pmatrix}.
\]

Step 3. Each column of $A_2$ and $B_1$ is multiplied by the inverse of its greatest common factor, respectively, we gain $A_3,B_2$.
\[
A_3=\begin{pmatrix}
	\begin{array}{cccccc}
		b^{k-1}+a^k&a&a^2&\cdots&a^{k-2}& a^{k-1}\\
		a&0&0&\cdots&0&1\\
		a^2&0&0&\cdots &1&a\\
		\vdots&\vdots&\vdots& \ddots &\vdots& \vdots\\
		a^{k-2}& 0&1&\cdots&a^{k-4}&a^{k-3}\\
		a^{k-1}& 1 &a&\cdots & a^{k-3}&a^{k-2}
	\end{array}
\end{pmatrix},
\]
\[
B_2=\begin{pmatrix}
	\begin{array}{cccccc}
		b^{k-1}+a^k & a^{k-1}&a^{k-2} &\cdots&a^2 & a\\
		a & 1 & 0&\cdots& 0&0\\
		a^2 &a&1&\cdots &0&0\\
		\vdots & \vdots&\vdots &\ddots& \vdots&\vdots\\
		a^{k-2}&a^{k-3}&a^{k-4}&\cdots&1&0\\
		a^{k-1} & a^{k-2}&a^{k-3}&\cdots &a&1
	\end{array}
\end{pmatrix}.
\]

Step 4. Swaping the $i$-th\:($i\geq 2$) column of $A_3$ with the $(k-i+2)$-th column of $A_3$, we gain $A_4$.\\
\[
A_4=\begin{pmatrix}
	\begin{array}{cccccc}
		b^{k-1}+a^k& a^{k-1}&a^{k-2}&\cdots&a^2&a\\
		a&1&0&\cdots&0&0\\
		a^2&a&1&\cdots &0&0\\
		\vdots&\vdots&\vdots& \ddots &\vdots& \vdots\\
		a^{k-2}& a^{k-3}&a^{k-4}&\cdots&1&0\\
		a^{k-1}& a^{k-2} &a^{k-3}&\cdots & a&1
	\end{array}
\end{pmatrix}.
\]

After the above steps, $A$ becomes $A_4$ and $A^T$ becomes $B_2$, and we have $A_4=B_2$, which means there exist $S$, $M$, such that $SGM=H$. Thus $C$ is isodual.
\end{proof}

\begin{thm}
Over $GF(q)$, where $q$ is a power of $2$. A polycyclic code $C$ associated with trinomial $x^{2m}+a^2x^{2i}+b^2$ and generated by $g(x)=x^m+ax^i+b$ is isodual, where $i\nmid m$.
\end{thm}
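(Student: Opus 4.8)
The plan is to follow the matrix-equivalence strategy used for Theorem 4.2, adapted to the more intricate combinatorics that arise when $i\nmid m$. First I would reduce to the coprime case. Writing $d=\gcd(m,i)$, $m=dm_0$, $i=di_0$ with $\gcd(m_0,i_0)=1$, the code associated with $(m,i)$ is exactly the $d$-fold inflation (in the sense of Proposition 4.1) of the code associated with $(m_0,i_0)$, so by that proposition it suffices to treat the case $\gcd(m,i)=1$; since $i\nmid m$ forces $i_0>1$, the reduced problem still lies in the regime $1<i<m$ with $\gcd(m,i)=1$. A useful preliminary remark is that over a field of characteristic $2$ one has $f(x)=x^{2m}+a^2x^{2i}+b^2=(x^m+ax^i+b)^2=g(x)^2$, so $C=\langle g\rangle$ is an ideal of $\mathbb{F}_q[x]/\langle g^2\rangle$ of dimension $m$ and length $2m$; in particular the generator and parity-check matrices split into square $m\times m$ blocks.

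Next I would write the generator matrix with rows $x^jg(x)=bx^j+ax^{j+i}+x^{j+m}$ for $0\le j\le m-1$ and split it as $G=(L\mid R)$, where $L=bI_m+aN'$ records the coefficients landing in columns $0,\dots,m-1$ and $R=I_m+aN$ records those landing in columns $m,\dots,2m-1$; here $N'$ and $N$ are the $0/1$ ``step-$i$'' matrices coming from the term $ax^{j+i}$, the wrap-around $j+i\ge m$ feeding into $R$. Since $L$ is triangular with nonzero diagonal it is invertible, so left multiplication by $L^{-1}$ puts $G$ into standard form $(I_m\mid A)$ with $A=L^{-1}R$, and in characteristic $2$ the parity-check matrix is $H=(A^{T}\mid I_m)$. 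As in Theorem 4.2, the goal is then to produce an invertible $S$ (row operations) and a monomial $M$ (column scalings and permutations) with $SGM=H$, which concretely amounts to carrying $A$ and $A^{T}$ to a common normal form by row operations together with monomial column operations.

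The execution mirrors the four-step reduction of Theorem 4.2: clear the negative powers of $b$ by row operations, use row swaps to reverse the order of the rows, normalize each column by the inverse of its common factor (a monomial operation), and finally reverse the order of the columns by a permutation (again monomial). The essential new ingredient is that when $\gcd(m,i)=1$ the indices $\{ti \bmod m\}$ run through all of $\mathbb{Z}/m\mathbb{Z}$, so the nonzero entries of $A$ form a single step-$i$ cyclic staircase rather than the block-diagonal staircase of the $i\mid m$ case. I would therefore reindex rows and columns by the permutation $j\mapsto i^{-1}j \bmod m$ induced by the inverse of $i$ modulo $m$, which converts the step-$i$ pattern into a step-$1$ pattern and so aligns the reduced form of $A$ with that of $A^{T}$. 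Once the two normal forms coincide, the existence of $S$ and monomial $M$ with $SGM=H$ follows, and $C$ is isodual.

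The main obstacle I anticipate is precisely this bookkeeping: verifying that the reindexing $j\mapsto i^{-1}j\bmod m$ is compatible with both halves of the code simultaneously (so that it is realized by a single monomial matrix on the $2m$ coordinates) and that, after the column normalizations, the entries of the transformed $A$ and $A^{T}$ agree entry-for-entry. Tracking the powers of $a$ and $b$ attached to each entry along the wrapped staircase, and confirming that the characteristic-$2$ cancellations make the off-diagonal correction terms match, is where the real work lies; the remainder is a routine, if lengthy, adaptation of the argument for Theorem 4.2.
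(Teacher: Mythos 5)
Your overall framework is the same as the paper's: put $G$ in standard form $(I_m\mid A)$ with $A=L^{-1}R$, take $H=(A^{T}\mid I_m)$, and reduce $A$ and $A^{T}$ to a common normal form by row operations plus monomial column operations. Your preliminary reduction to $\gcd(m,i)=1$ via Proposition 4.1 is legitimate (and not in the paper, which instead writes $m=ki+t$ directly), but it does not remove the core difficulty, since the reduced case still has $1<i_0<m_0$.

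The genuine gap is in the one step that carries all the content, and your proposed key ingredient there is the wrong permutation. Transposition sends the nonzero entry of $A$ at position $(j,\,j+si \bmod m)$ to $(j',\,j'-s i \bmod m)$, i.e.\ it \emph{reverses} the sign of the index difference; to align $A$ with $A^{T}$ you therefore need a difference-reversing (affine reflection) permutation. That is exactly what the paper's Steps 3--4 implement: reversing the first $i$ rows/columns and the last $m-i$ rows/columns is the single map $j\mapsto i-1-j \pmod m$. Your multiplicative reindexing $j\mapsto i^{-1}j\bmod m$ preserves the direction of the staircase (it sends differences $si$ to $s$), so by itself it cannot match $A$ to $A^{T}$; and the claim that it ``converts the step-$i$ pattern into a step-$1$ pattern'' identical to the Theorem 4.2 case is false, because $A$ is not a full cyclic staircase --- each row steps by $i$ only until it wraps past $m$ once and then stops, so after reindexing the band lengths vary from row to row (row $J$ has roughly $\lceil (m-iJ\bmod m)/i\rceil$ entries) and the attached powers of $a,b$ do not reproduce the uniform band of Theorem 4.2. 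You acknowledge that verifying the entry-for-entry agreement is ``where the real work lies,'' but that verification is the theorem; as it stands the proposal defers the essential identity and, if carried out with the stated permutation, would fail to produce it. (One can check this already on the paper's Example 4.1 with $m=5$, $i=2$: the permutation that works there is $j\mapsto 1-j\bmod 5$, not $j\mapsto 3j\bmod 5$.)
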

\begin{proof}
Let $m=ki+t\:(1\leq t< i)$. Similar to the proof of Theorem 4.2, we discuss it by dividing it into the following four steps.

Step 1. Reducing the power of $b$ to nonnegative in $A$ and $A^T$ as we do in Case 1, respectively, we gain $A_1,B_1$.

Step 2. Each column of $A_1$ and $B_1$ is multiplied by the inverse of its greatest common factor as we do in Case 1, respectively, we gain $A_2,B_2$.

Step 3. Switching the first $i$ rows of $A_2$ with the center axis of this $i$ rows, and do the same for the last $(k-1)i+t$ rows, we gain $A_3$.

Step 4. Switching the first $i$ columns of $A_3$ with the center axis of them, and do the same for the last $(k-1)i+t$ columns, we gain $A_4$.

Therefore, we also have $A_4=B_2$, which means there exist $S,M$, such that $SGM=H$, and $C$ is isodual.
\end{proof}

Here, we give an examples to illustrate the steps in Theorem 4.3.

\begin{ex}
Considering the polycyclic code $C$ generated by $g(x)=x^5+ax^2+b\mid x^{10}+a^2x^{4}+b^2$ over $\mathbb{F}_{2^m}$, where $2\nmid 5$. We have the generator matrix of the form
\[G=\begin{pmatrix}\begin{array}{ccccc|ccccc}
b &0 &a &0 &0 &1 &0 &0 &0 &0\\
0 &b &0 &a &0 &0 &1 &0 &0 &0\\
0 &0 &b &0 &a &0 &0 &1 &0 &0\\
0 &0 &0 &b &0 &a &0 &0 &1 &0\\
0 &0 &0 &0 &b &0 &a &0 &0 &1\\
\end{array}\end{pmatrix},\]\[
RG=\begin{pmatrix}\begin{array}{ccccc|ccccc}
1 &0 &0 &0 &0 &b^{-1} &b^{-3}a^3 &b^{-2}a &0 &b^{-3}a^2\\
0 &1 &0 &0 &0 &b^{-2}a^2 &b^{-1} &0 &b^{-2}a &0\\
0 &0 &1 &0 &0 &0 &b^{-2}a^2 &b^{-1} &0 &b^{-2}a\\
0 &0 &0 &1 &0 &b^{-1}a &0 &0 &b^{-1} &0 \\
0 &0 &0 &0 &1 &0 &b^{-1}a &0 &0 &b^{-1}\\
\end{array}\end{pmatrix}=(I_5\mid A).
\]
Let $H$ have the form
\[H=\begin{pmatrix}\begin{array}{ccccc|ccccc}
b^{-1} &b^{-2}a^2 &0 &b^{-1}a &0 &1 &0 &0 &0 &0\\
b^{-3}a^3 &b^{-1} &b^{-2}a^2 &0 &b^{-1}a &0 &1 &0 &0 &0\\
b^{-2}a &0 &b^{-1} &0 &0 &0 &0 &1 &0 &0\\
0 &b^{-2}a &0 &b^{-1} &0 &0 &0 &0 &1 &0\\
b^{-3}a^2 &0 &b^{-2}a &0 &b^{-1} &0 &0 &0 &0 &1\\
\end{array}\end{pmatrix}=(A^T\mid I_5).\]
After Step 1 and Step 2, we have
\[A_2=\begin{pmatrix}
b^2 &a^3 &a &0 &a^2\\
a^2 &b &0 &a &0\\
0 &a^2 &1 &0 &a\\
a &0 &0 &1 &0\\
0 &a &0 &0 &1\\
\end{pmatrix},\;\;
B_2=\begin{pmatrix}
b &a^2 &0 &a &0\\
a^3 &b^2 &a^2 &0 &a\\
a &0 &1 &0 &0\\
0 &a &0 &1 &0\\
a^2 &0 &a &0 &1\\
\end{pmatrix}.\]
After Step 3, we can get
\[
A_3=\begin{pmatrix}
a^2 &b &0 &a &0\\
b^2 &a^3 &a &0 &a^2\\
0 &a &0 &0 &1\\
a &0 &0 &1 &0\\
0 &a^2 &1 &0 &a\\
\end{pmatrix},\;\;
B_2=\begin{pmatrix}
b &a^2 &0 &a &0 \\
a^3 &b^2 &a^2 &0 &a\\
a &0 &1 &0 &0 \\
0 &a &0 &1 &0 \\
a^2 &0 &a &0 &1\\
\end{pmatrix}.\]
After Step 4, we can get
\[
A_4=\begin{pmatrix}
b &a^2 &0 &a &0 \\
a^3 &b^2 &a^2 &0 &a\\
a &0 &1 &0 &0 \\
0 &a &0 &1 &0 \\
a^2 &0 &a &0 &1\\
\end{pmatrix},\;\;
B_2=\begin{pmatrix}
b &a^2 &0 &a &0 \\
a^3 &b^2 &a^2 &0 &a\\
a &0 &1 &0 &0 \\
0 &a &0 &1 &0 \\
a^2 &0 &a &0 &1\\
\end{pmatrix}.\]

Combining all the above steps, there exist $R'$ and $M$ such that $R'GM=H$, which means $g(x)$ would generate an isodual polycyclic code.\\
\end{ex}

In \cite{ref1}, the authors show that if $g^2(x)=x^n-ax^i-b$, then over $\mathbb{F}_q$, where $q$ is a power of an odd prime, $g(x)=x^{n/2}\pm c$ with $i=n/2$, $-a=\pm2c$ and $b=-c^2$, and over $\mathbb{F}_q$, where $q$ is a power of 2, $g(x)=x^{n/2}+a'x^{i/2}+b'$ with $(a')^2=-a$ and $(b')^2=-b$. Combining the results of Theorems 4.1, 4.2, and 4.3, we give Theorem 4.4, which is also Conjecture 4.1 in \cite{ref1}

\begin{thm}(Conjecture 4.1 in \cite{ref1})
If a polycyclic code $C$ is generated by $g(x)\mid x^n-ax^i-b$ and $g^2(x)=x^n-ax^i-b$, then $C$ is isodual.
\end{thm}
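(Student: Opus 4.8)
The plan is to derive the statement as an immediate corollary of Theorems 4.1, 4.2 and 4.3, using the structural description of $g(x)$ recalled immediately before the theorem. The hypothesis $g^2(x)=x^n-ax^i-b$ pins $g(x)$ down to one of two normal forms according to the characteristic of $\mathbb{F}_q$, and each normal form is precisely the setting of one of the three isoduality theorems already proved. So the whole argument reduces to a case split on the characteristic together with a careful matching of parameters; no new equivalence needs to be constructed, since the explicit monomial matrices relating $G$ and $H$ have already been produced in the preceding proofs.

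First I would treat the odd-characteristic case. Here the recalled description gives $g(x)=x^{n/2}\pm c$ with $i=n/2$, $-a=\pm 2c$ and $b=-c^2$, so that $x^n-ax^i-b=x^{n}\pm 2c\,x^{n/2}+c^2$. Writing $m=n/2$, this is exactly the trinomial $x^{2m}\pm 2c\,x^{m}+c^2$ with generator $x^m\pm c$ treated in Theorem 4.1, and that theorem gives that $C$ is isodual.

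Next I would treat characteristic $2$. Now the description gives $g(x)=x^{n/2}+a'x^{i/2}+b'$ with $(a')^2=-a$ and $(b')^2=-b$; in particular both $n$ and $i$ must be even for these exponents to be integral, and since $-a=a$, $-b=b$ in characteristic $2$ we have $(a')^2=a$ and $(b')^2=b$. Putting $m=n/2$ and $j=i/2$, the trinomial becomes $x^{2m}+(a')^2x^{2j}+(b')^2=x^n+ax^i+b$ with generator $x^m+a'x^{j}+b'$, which is precisely the configuration of Theorems 4.2 and 4.3 (with the roles of $a$, $b$, $i$ there played by $a'$, $b'$, $j$). I would then split on the divisibility of the new exponents: since $j\mid m \iff i\mid n$, Theorem 4.2 applies when $i\mid n$ and Theorem 4.3 applies when $i\nmid n$, and in either subcase $C$ is isodual.

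The two cases exhaust all possibilities, so the proof is complete. The only point that demands any care, and the step I would double-check most, is the bookkeeping of the substitution $m=n/2$, $j=i/2$ in the characteristic-$2$ case: one must confirm that the dichotomy $j\mid m$ versus $j\nmid m$ lines up correctly with the hypotheses ``$i\mid m$'' and ``$i\nmid m$'' of Theorems 4.2 and 4.3 after this renaming. Beyond this there is no genuine obstacle, as all the substantive work is inherited from the three earlier theorems.
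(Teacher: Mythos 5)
Your proposal is correct and follows exactly the route the paper takes: it derives Theorem 4.4 by combining the structural description of $g(x)$ with $g^2(x)=x^n-ax^i-b$ (from \cite{ref1}) with Theorems 4.1, 4.2 and 4.3, splitting on the characteristic and then on divisibility of the exponents. Your bookkeeping of the substitution $m=n/2$, $j=i/2$ and the equivalence $j\mid m \iff i\mid n$ is accurate, so nothing is missing.
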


Theorem 4.4 gives a simple way to construct isodual polycyclic codes, but it's not the only way. Example 4.2 illustrates this, which is also a counter-example of Conjecture 4.2 in \cite{ref1}. Here is the content of Conjecture 4.2. (denotes it as Conjecture 4.1.)

\begin{con} (Conjecture 4.2 in \cite{ref1})
Over $\mathbb{F}_2$, a polycyclic code $C$ generated by $g(x)\mid x^n-ax^i-b$ is isodual if and only if $g^2(x) = x^n-ax^i-b$.
\end{con}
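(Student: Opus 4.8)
The plan is to read the statement as a biconditional and dispatch its two directions separately. The forward implication---if $g^2(x)=x^n-ax^i-b$ then $C$ is isodual---is exactly Theorem 4.4 (Conjecture 4.1 in \cite{ref1}), which has already been established, so nothing remains to be done there. All of the content therefore lies in the converse: must an isodual polycyclic code associated with a trinomial over $\mathbb{F}_2$ have a generator that squares to its associate polynomial? I expect this to fail, so my goal is to exhibit a single counterexample, namely an isodual code $C=\langle g\rangle$ with $g(x)\mid x^n+x^i+1$ but $g^2(x)\neq x^n+x^i+1$.

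To organize the search I would first record the constraints that isoduality imposes. Since $C$ and $C^{\perp}$ have equal dimension, isoduality forces $\deg g=n/2$, so that $x^n+x^i+1=g(x)h(x)$ with $\deg g=\deg h=n/2$; moreover over $\mathbb{F}_2$ one has $g^2=gh$ if and only if $g=h$, so a counterexample is precisely a factorization $f=gh$ into two \emph{distinct} equal-degree factors for which $\langle g\rangle$ is isodual. Two natural sources suggest themselves. First, I would scan even $n$ for trinomials that are not perfect squares---over $\mathbb{F}_2$ this means not both $n$ and $i$ are even, since a polynomial over $\mathbb{F}_2$ is a square exactly when it lies in $\mathbb{F}_2[x^2]$---because for such $f$ every equal-degree split automatically has $g\neq h$. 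Second, even a perfect-square trinomial $f=g_0^{2}$ may admit a second degree-$n/2$ divisor $g\neq g_0$ whenever $g_0$ has several irreducible factors permitting an exchange of multiplicities, and $\langle g\rangle$ could then be isodual while $g^2\neq f$.

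For each candidate $g$ I would form the generator matrix $G$ and then compute a parity-check matrix $H$---equivalently a generator matrix of $C^{\perp}$---by exactly the row reduction to the form $(I\mid A)$ and $(A^{T}\mid I)$ used in the proofs of Theorems 4.1--4.3. Over $\mathbb{F}_2$ a monomial matrix is just a permutation matrix, so ``$C$ isodual'' means ``$C$ is permutation-equivalent to $C^{\perp}$''. I would test this by searching for a permutation matrix $M$ with $GM$ a generator matrix of $C^{\perp}$ (that is, the rows of $GM$ lie in $C^{\perp}$ and stay independent), falling back on a direct check in Magma as in Example 3.1, and corroborating any positive match with the cheap necessary conditions that $C$ and $C^{\perp}$ share the same dimension and weight enumerator.

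The main obstacle is the final certification of isoduality, namely the permutation equivalence $C\cong C^{\perp}$. Equal weight enumerators are necessary but not sufficient for equivalence, so a convincing counterexample must either come with an explicit coordinate permutation carrying $C$ onto $C^{\perp}$, or be confirmed by an equivalence-testing computation; producing and verifying that permutation is where the real work lies. By comparison, the arithmetic of locating a trinomial with a suitable distinct equal-degree factorization is routine, so once such an $f$ and $g$ are in hand the whole problem collapses to settling this one equivalence.
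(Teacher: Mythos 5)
Your plan is essentially the paper's own treatment: the ``if'' direction is dispatched by Theorem 4.4, and the ``only if'' direction is refuted by a counterexample of exactly the second type you anticipate, namely Example 4.2, where $x^{20}+x^{10}+1=(x^2+x+1)^2(x^4+x+1)^2(x^4+x^3+1)^2$ is a perfect square but the distinct degree-$10$ divisor $g(x)=(x^2+x+1)(x^4+x+1)^2$ still generates an isodual code. The obstacle you flag---actually certifying the permutation equivalence $C\cong C^{\perp}$---is real and is not resolved analytically in the paper either; it simply defers to a Magma computation, just as your fallback does.
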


The sufficiency of Conjecture 4.1 can be regarded as a direct corollary (denoted as Corollary 4.1) of Theorem 4.4. As for the necessity of Conjecture 4.1, it is not correct as the following counter example shows.

\begin{cor}
	Over $\mathbb{F}_2$, a polycyclic code $C$ generated by $g(x)\mid x^n-x^i-1$ is isodual if $g^2(x)=x^n-x^i-1$.
\end{cor}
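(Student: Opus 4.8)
The plan is to obtain this statement as a direct specialization of Theorem 4.4, exploiting that $\mathbb{F}_2$ is a field of characteristic $2$ (hence a power of $2$, so the relevant hypotheses apply) and that the trinomial $x^n-x^i-1$ is exactly $x^n-ax^i-b$ with $a=b=1$. First I would record the arithmetic simplification: over $\mathbb{F}_2$ we have $-1=1$, so $x^n-x^i-1=x^n+x^i+1$, and the hypothesis $g^2(x)=x^n-x^i-1$ of the corollary coincides verbatim with the hypothesis $g^2(x)=x^n-ax^i-b$ of Theorem 4.4 under the substitution $a=b=1$.

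Next I would dispose of the divisibility requirement $g(x)\mid x^n-x^i-1$ appearing in Theorem 4.4: it is automatic here, since $g^2(x)=x^n-x^i-1$ immediately gives $g(x)\mid g^2(x)=x^n-x^i-1$, so no separate check is needed. With both hypotheses of Theorem 4.4 in force, a single application of that theorem delivers the conclusion that $C$ is isodual.

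For transparency I would also trace how the structural description quoted from \cite{ref1} specializes, so that the appeal to Theorem 4.4 (which bundles Theorems 4.2 and 4.3) is seen to be legitimate in every case. Over a field of characteristic $2$, the condition $g^2(x)=x^n-ax^i-b$ forces $g(x)=x^{n/2}+a'x^{i/2}+b'$ with $(a')^2=-a$ and $(b')^2=-b$; taking $a=b=1$ yields $a'=b'=1$, hence $g(x)=x^{m}+x^{j}+1$ where $m=n/2$ and $j=i/2$ (in particular $n$ and $i$ are even). This is precisely the generator form treated by Theorem 4.2 when $j\mid m$ and by Theorem 4.3 when $j\nmid m$, and these two sub-cases are exhaustive.

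The argument is routine and I anticipate no genuine obstacle: the statement is essentially Theorem 4.4 read at $a=b=1$. The only point deserving a moment's care is confirming that the parameters of the corollary do fall within the case analysis of Theorems 4.2 and 4.3---that $n$ and $i$ are necessarily even and that the alternatives $j\mid m$ and $j\nmid m$ cover all possibilities---which the structural reduction above makes immediate.
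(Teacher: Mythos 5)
Your proposal is correct and matches the paper's treatment exactly: the paper presents this corollary as a direct specialization of Theorem~4.4 (which bundles Theorems~4.2 and~4.3 via the structural fact that $g^2(x)=x^n-ax^i-b$ over characteristic $2$ forces $g(x)=x^{n/2}+a'x^{i/2}+b'$), and you instantiate it at $a=b=1$ with the same case split $j\mid m$ versus $j\nmid m$. Your added checks (that divisibility is automatic and that $n,i$ must be even) are sound and only make the specialization more explicit than the paper's one-line remark.
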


\begin{ex}
Let $C$ be a polycyclic code generated by $g(x)\mid x^{20}+x^{10}+1$ over $\mathbb{F}_2$, where
\[\begin{aligned}
	x^{20}+x^{10}+1&=(x^2+x+1)^2(x^4 + x + 1)^2(x^4 + x^3 + 1)^2,\\
	g(x)&=(x^2 + x + 1)(x^4 + x + 1)^2,\\
	g^2(x)&\neq x^{20}+x^{10}+1.
\end{aligned}
\]
Using Magma, $g(x)$ can also generate an isodual polycyclic code.
\end{ex}

\section{The Structure of Self-dual Polycyclic Codes}
This section is devoted to solving Conjecture 4.7 in \cite{ref1} (see Theorem 5.1 in this section). Let us first give the following lemma and proposition.

\begin{lm} [Lemma 4.6 in \cite{ref1}]
Let $g(x)=a_0+a_1x+\cdots+a_kx^k\mid x^n-ax^i-b$ be the generator polynomial of a polycyclic code $C$ of length $n$, and let $e=Ord(x^n-ax^i-b)$ and $h(x)=b_0+b_1x+\cdots+b_{e-k}x^{e-k}=\dfrac{x^e-1}{g(x)}$. Then a parity-check matrix $H$ of $C$ is given by
\[H=\begin{pmatrix}
	\begin{array}{ccccccc}
	b_{e-k}&b_{e-k-1} &b_{e-k-2}&\cdots &\cdots&\cdots&\cdots\\
	0&b_{e-k} &b_{e-k-1} &b_{e-k-2} &\cdots &\cdots &\cdots \\
	\vdots &\ddots &\ddots &\ddots &\ddots &\ddots  &\vdots \\
	0 &\cdots &0 &b_{e-k} &b_{e-k-1} &b_{e-k-2} &\cdots\\
	\end{array}
\end{pmatrix}_{k\times n},
\]
which is a submatrix of the parity check matrix of the cyclic code of length $e$ generated by $g(x)\mid x^e-1$.
\end{lm}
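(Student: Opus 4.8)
The plan is to realize $C$ as a column-truncation of the cyclic code $\tilde C$ of length $e$ generated by $g(x)$, and then to import the classical parity-check matrix of $\tilde C$. Since $e=\ord(x^n-ax^i-b)$ we have $(x^n-ax^i-b)\mid x^e-1$, and because $g(x)\mid x^n-ax^i-b$ it follows that $g(x)\mid x^e-1$; hence $x^e-1=g(x)h(x)$ with $\deg h=e-k$ and leading coefficient $b_{e-k}\neq 0$. First I would record a generator matrix of $C$. By Theorem 2.1 the rows of a generator matrix $G$ of $C$ are the coefficient vectors of $g(x),xg(x),\dots,x^{n-k-1}g(x)$; each of these has degree at most $n-1$, so no reduction modulo $x^n-ax^i-b$ occurs and $G$ is the clean $(n-k)\times n$ staircase of the $a_j$ with no wrap-around. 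I would then observe that the matrix $H$ in the statement is exactly the submatrix formed by the first $n$ columns of the standard $k\times e$ parity-check matrix of $\tilde C$ (the staircase built from the coefficients of $h$, read from $b_{e-k}$ downward), which is precisely the asserted ``submatrix'' property.

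The core of the argument is to verify $GH^T=0$ directly. Writing $G_{r,c}=a_{c-r}$ and $H_{s,c}=b_{e-k-c+s}$, the $(r,s)$ entry of $GH^T$ equals $\sum_{j} a_j\,b_{\,e-k+s-r-j}$, which is exactly the coefficient of $x^{m}$ in the product $g(x)h(x)=x^e-1$, where $m=e-k+s-r$. Letting $r$ range over $0,\dots,n-k-1$ and $s$ over $0,\dots,k-1$ gives $e-n+1\le m\le e-1$; since $n\le e$ this forces $1\le m\le e-1$, and in this range the coefficient of $x^m$ in $x^e-1$ vanishes. Hence every entry of $GH^T$ is $0$.

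The step I expect to be the main obstacle is justifying that passing to the truncated $H$ (only $n$ of the $e$ columns of $\tilde H$) discards no nonzero contribution, so that the truncated inner product still equals the full coefficient of $x^m$ in $g(x)h(x)$. Here I would argue with the index ranges: a nonzero term $a_j b_{m-j}$ occurs only for $0\le j\le k$, and for such $j$ the column index $c=r+j$ satisfies $c\le r+k\le n-1$ (using $r\le n-k-1$), so every nonzero product already sits among the first $n$ columns, while terms with $c<s$ or $c>n-1$ carry a vanishing $a_j$ or $b_{m-j}$. Once $GH^T=0$ is established, I would finish by a dimension count: the staircase shape of $H$ with nonzero lead entries $b_{e-k}$ in positions $(s,s)$ gives $\operatorname{rank}H=k$, so $\ker H$ has dimension $n-k=\dim C$; combined with $C\subseteq\ker H$ this yields $C=\ker H$, i.e. $H$ is a parity-check matrix of $C$.
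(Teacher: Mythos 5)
Your proof is correct. Note, however, that the paper states this lemma as a citation (Lemma 4.6 of the reference) and supplies no proof of its own, so there is no in-paper argument to compare against. Your argument is the standard one and is complete: the identification of $(GH^{T})_{r,s}$ with the coefficient of $x^{e-k+s-r}$ in $g(x)h(x)=x^{e}-1$, the check that every nonzero term $a_{j}b_{m-j}$ lands in one of the first $n$ columns (so truncating the length-$e$ check matrix loses nothing), and the rank/dimension count $\operatorname{rank}H=k$, $\dim\ker H=n-k=\dim C$ together establish both that $H$ is a parity-check matrix of $C$ and that it is the stated submatrix of the parity-check matrix of the length-$e$ cyclic code.
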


\begin{prop}
	Let the monic polynomial $g(x) \in \mathbb{F}_q[x]$ with $\deg(g(x))=k>0$. If both $g(x)\mid x^{2k}+1$ and $g(x)\mid x^{2k}-bx^i-c$, where $b,c\neq 0$, then $k=i$, $c+1\neq 0$ and $g(x)=x^k-a$, where $a \neq 0$.
\end{prop}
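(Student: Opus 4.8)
The plan is to exploit the two hypotheses simultaneously by first killing off the leading term. Since $g(x)$ divides both $x^{2k}+1$ and $x^{2k}-bx^i-c$, it divides their difference
\[
(x^{2k}+1)-(x^{2k}-bx^i-c)=bx^i+(c+1).
\]
This binomial is the object that carries all the arithmetic information, and the whole proof will consist of extracting consequences from $g(x)\mid bx^i+(c+1)$ together with $g(x)\mid x^{2k}+1$. Before anything else I would dispose of the degenerate possibility $c+1=0$: in that case $g(x)\mid bx^i$, hence $g(x)\mid x^i$ because $b\neq 0$, but $\gcd(x^i,\,x^{2k}+1)=1$ since $x^{2k}+1$ has nonzero constant term, so $g$ would be a unit, contradicting $\deg g=k>0$. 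This already yields the claim $c+1\neq 0$, and it lets me write $bx^i+(c+1)=b\,(x^i-\gamma)$ with $\gamma:=-(c+1)/b\neq 0$, so that $g(x)\mid x^i-\gamma$.

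Next I would run a degree count. Because $g$ is monic of degree $k$ and divides the nonzero polynomial $x^i-\gamma$, I must have $k\le i$; combined with the trinomial constraint $i<2k$ this confines $i$ to the range $k\le i<2k$, equivalently $0<2k-i\le k$. The heart of the argument is then a multiplicative manipulation modulo $g$. Working in $\mathbb{F}_q[x]/\langle g(x)\rangle$ I have the two congruences $x^i\equiv\gamma$ and $x^{2k}\equiv-1$, and I would combine them through the factorization $x^{2k}=x^i\cdot x^{2k-i}$, giving $\gamma\,x^{2k-i}\equiv-1$, that is, $x^{2k-i}\equiv-\gamma^{-1}\pmod{g}$. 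Equivalently, $g(x)\mid x^{2k-i}+\gamma^{-1}$.

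With this in hand a second degree count closes everything: $x^{2k-i}+\gamma^{-1}$ is nonzero of degree $2k-i\le k$, and a monic polynomial of degree exactly $k$ can divide it only if $2k-i\ge k$, which forces $2k-i=k$, i.e. $i=k$, and then forces equality $g(x)=x^{2k-i}+\gamma^{-1}=x^{k}-a$ with $a:=-\gamma^{-1}\neq 0$. I expect the main obstacle to be spotting the multiplicative identity $x^{2k}=x^i\,x^{2k-i}$ as the correct move; it is precisely what converts the high-degree relation $x^i\equiv\gamma$ into a low-degree relation of degree $2k-i\le k$, so that the degree bookkeeping can be applied a second time to collapse $i$ down to $k$ and simultaneously read off the shape of $g$. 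The remaining verifications ($c+1\neq 0$ and $a\neq 0$) then fall out for free from $\gamma\neq 0$.
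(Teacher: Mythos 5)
Your proof is correct and follows essentially the same route as the paper's: both reduce the two divisibility hypotheses to $g(x)\mid bx^i+(c+1)$, dispose of $c+1=0$ via $\gcd(x^i,x^{2k}+1)=1$, and then apply two rounds of degree counting after a further Euclidean-type reduction (your congruence $\gamma x^{2k-i}\equiv -1 \pmod{g}$ is exactly the paper's remainder $\alpha b^{-1}x^t+1$ with $t=2k-i$). Your endgame is marginally cleaner, reading $g=x^k+\gamma^{-1}$ off directly by a degree comparison instead of the paper's case split on whether $-bx^k+\alpha$ divides $x^{2k}+1$, but the underlying argument is the same.
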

\begin{proof}
	If $c+1=0$, then $x^{2k}-bx^i+1 = (x^{2k}+1)-bx^i$, which means $\gcd(x^{2k}+1,x^{2k}-bx^i+1)=\gcd(x^{2k}+1,x^i)=1$. Since $g(x)\mid 1$, which means $\deg(g(x))=0$, which is a contradiction. Hence $c+1\neq 0$, we have
	\[\begin{aligned}
		x^{2k}-bx^i-c &\equiv -bx^i+\alpha\;(\bmod\; x^{2k}+1),\\
	\end{aligned}\]
	where $\alpha=-(c+1)\neq 0$. Since $g(x)\mid \gcd(x^{2k}-bx^i-c,x^{2k}+1)$, we have $g(x)\mid \gcd(x^{2k}+1,-bx^i+\alpha)$, i.e. $i\geq k$. If $2k>i>k$, let $2k=i+t$ \:($0< t <k$), then
	\[
	x^{2k}+1 \equiv \alpha b^{-1}x^t+1\; (\bmod \; -bx^i+\alpha).
	\]
	We have $\gcd(x^{2k}+1,-bx^i+\alpha)=\gcd(-bx^i+\alpha ,\alpha b^{-1}x^t+1)$, then $g(x)\mid \gcd(-bx^i+\alpha ,\alpha b^{-1}x^t+1)$, which means $\deg(\alpha b^{-1}x^t+1)=t\geq k$, which is a contradiction. Therefore, $k=i$. If $-bx^k+\alpha \mid x^{2k}+1$, since $\deg(g(x))=k$, then $g(x)=\gcd(x^{2k}+1,-bx^k+\alpha)=x^k-b^{-1}\alpha$. If $-bx^k+\alpha \nmid x^{2k}+1$, then
	\[
	x^{2k}+1 \equiv \gamma \; (\bmod \; -bx^k+\alpha),
	\]
	where $\gamma\neq 0$. Therefore, $\gcd(x^{2k}-bx^i-c,x^{2k}+1)=\gcd(-bx^k+\alpha,\gamma)=1$, which means $g(x)\mid 1$ and $\deg(g(x))=0$. Therefore, $g(x)=x^k-a$, where $a=b^{-1}\alpha=-(c+1)b^{-1}\neq 0$.
\end{proof}

The generators of self-dual polycyclic codes have the same form, which can be easily seen in the following theorem.
\begin{thm} (Conjecture 4.7 in \cite{ref1})
A polycyclic code $C$ generated by the monic polynomial $g(x)\mid x^n-bx^i-c$ is self-dual if and only if $g(x)=x^k-a\mid x^n-bx^k-c$, where $a^2=-1$, $\ord(x^n-ax^i-b)\geq 3k$ and $n=2k$. Also, the minimum distance of all self-dual polycyclic codes is $2$ and they are all actually constacyclic codes.
\end{thm}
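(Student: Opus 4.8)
The plan is to reduce self-duality to a single divisibility condition on $g$, and then to use Proposition 5.1 to force the shape $x^k-a$. Write $k=\deg g$. Since a self-dual code of length $n$ has dimension $n/2$ while the dimension of $C$ is $n-k$, the very first step in both directions is the forced identity $n-k=n/2$, that is $n=2k$. Under $n=2k$ the generator matrix $G$ of Theorem 2.1 has rows $g(x),xg(x),\dots,x^{k-1}g(x)$, each of degree at most $2k-1<n$; hence no reduction modulo the associate trinomial occurs, and every inner product between two rows of $G$ depends only on the coefficients of $g$, not on $b,c,i$.

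The key computation is to turn self-orthogonality into an identity for $g g^{*}$. Writing $g(x)=\sum_{m=0}^{k}g_m x^m$ with $g_k=1$, the inner product of the rows $x^{s}g$ and $x^{t}g$ equals the autocorrelation $\sum_{m}g_m g_{m+(s-t)}$, and since $[x^{k+\ell}]\big(g(x)g^{*}(x)\big)=\sum_m g_m g_{m+\ell}$, these inner products are exactly the middle coefficients of $g g^{*}$. Because $g(0)\neq 0$ (as $g\mid x^{2k}-bx^{i}-c$ with $c\neq 0$), the extreme coefficients are $[x^{0}]=[x^{2k}]=g_0$, so $C$ is self-orthogonal precisely when $g(x)g^{*}(x)=g_0\,(x^{2k}+1)$. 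As $\dim C=k=n/2$, self-orthogonality is the same as self-duality here, so \emph{$C$ is self-dual if and only if $n=2k$ and $g g^{*}=g_0(x^{2k}+1)$}; in particular self-duality forces $g(x)\mid x^{2k}+1$.

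For necessity I would now feed the two divisibilities $g\mid x^{2k}+1$ and $g\mid x^{2k}-bx^{i}-c$ (with $b,c\neq 0$, since the associate polynomial is a genuine trinomial) into Proposition 5.1, which returns $i=k$ and $g(x)=x^{k}-a$ with $a\neq 0$. Substituting $g=x^{k}-a$, $g^{*}=1-ax^{k}$ into $g g^{*}=g_0(x^{2k}+1)$ and reading off the coefficient of $x^{k}$ gives $1+a^{2}=0$, i.e. $a^{2}=-1$. For sufficiency I would simply reverse this: for $g=x^{k}-a$ with $a^{2}=-1$ one computes $g g^{*}=-a(x^{2k}+1)=g_0(x^{2k}+1)$ directly, so $C$ is self-orthogonal and, by the dimension count, self-dual. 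The two auxiliary claims then follow quickly: $g=x^{k}-a$ is itself a weight-$2$ codeword so $d\le 2$, while a weight-$1$ codeword would place a monomial---necessarily a unit of $\mathbb{F}_q[x]/\langle x^{2k}-bx^{i}-c\rangle$ since the constant term $-c\neq0$---inside the proper ideal $C$, which is impossible; hence $d=2$. Finally $a^{2}=-1$ yields $x^{2k}+1=(x^{k}-a)(x^{k}+a)$, so $g\mid x^{2k}+1$ and $C$ coincides, codeword for codeword, with the $(-1)$-constacyclic code of length $2k$ generated by $g$.

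The step I expect to be the main obstacle is establishing $g\mid x^{2k}+1$ from self-duality in a way that meshes with the paper's machinery. My autocorrelation identity settles it using only $G$, but the paper has set up Lemma 5.1 to describe $C^{\perp}$ through $h(x)=(x^{e}-1)/g(x)$ with $e=\ord(x^{n}-bx^{i}-c)$, and identifying that description of $C^{\perp}$ with $C$ requires the band of reversed $h$-coefficients to occupy all $n$ columns of $H$. This is exactly what the hypothesis $\ord(x^{n}-bx^{i}-c)\ge 3k$ secures: the top-left entry of $H$ is the coefficient of index $e-k$, and the band must reach index $e-k-(n-1)\ge 0$, i.e. $e\ge 3k-1$. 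I would therefore invoke the order bound precisely at the point of reconciling $C^{\perp}$ with Lemma 5.1, while letting the self-contained $g g^{*}$ computation carry the structural heart of the argument and avoid the heavier reciprocal-polynomial bookkeeping.
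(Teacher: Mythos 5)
Your proposal is correct in its essentials but reaches the key identity by a genuinely different route. The paper's necessity argument runs through Lemma 5.1: it builds the parity-check matrix $H$ from $h(x)=(x^{e}-1)/g(x)$ with $e=\ord(x^{n}-bx^{i}-c)$, row-reduces both $g_0^{-1}G$ and $H$ to systematic form, matches entries to force $b_{e-k-r}=g_0^{-1}g_r$, and then splits $h=h_H+h_J$ with $h_H=g_0^{-1}x^{e-2k}g^{*}$ and runs a degree argument to extract $x^{2k}+1=g_0^{-1}g^{*}(x)g(x)$; the bound $e\ge 3k$ falls out of that degree argument. You instead compute the Gram matrix of the rows of $G$ directly, observe that the row inner products are exactly the autocorrelations of the coefficient sequence of $g$, i.e.\ the middle coefficients of $gg^{*}$, and conclude in one step that self-duality is equivalent to $n=2k$ together with $g(x)g^{*}(x)=g_0(x^{2k}+1)$. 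This is cleaner and entirely bypasses $h(x)$, the order $e$, and the row-reduction bookkeeping; from that point on both proofs coincide (Proposition 5.1 gives $i=k$ and $g=x^{k}-a$, and the $x^{k}$-coefficient gives $a^{2}=-1$), and your sufficiency, distance-$2$, and constacyclicity arguments are all sound (your weight-one/unit argument replaces the paper's appeal to Theorem 2.2, equally validly). The one loose end is the clause $\ord(x^{n}-bx^{i}-c)\ge 3k$ in the statement: in the paper this is \emph{derived} as part of necessity, whereas in your approach it never arises naturally and you only gesture at recovering it through Lemma 5.1 (and your count gives $e\ge 3k-1$ rather than $e\ge 3k$). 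You should close this by a short separate argument, e.g.\ from $x^{k}-a\mid x^{e}-1$ one gets $k\mid e$ and $a^{e/k}=1$, which in odd characteristic forces $4\mid e/k$ and hence $e\ge 4k>3k$; without some such step your proof establishes a slightly weaker statement than the one claimed.
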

\begin{proof}
	We first prove the necessity. Let $\ord(x^n-bx^i-c)=e$ and $g(x)=\sum_{i=0}^kg_ix^i$, where $\deg(g(x))=k$ and $g_k=1$. Since $C$ is self-dual, we need $n-k=k$, i.e. $n=2k$. As $G$ has the form in Theorem 2.1, and let the parity-check matrix of $C$ be $H$, which has the form in Lemma 5.1. Here, let $h(x)=\sum_{i=0}^{e-k}b_{i}x^i$, then $h(x)g(x)=x^e-1$, which means $b_{e-k}=1$ and $b_0 g_0=-1$. Let $G_1=g_0^{-1}G$. In fact, $G_1$ and $H$ has the form
	\[
	G_1=\begin{pmatrix}\begin{array}{ccccc|ccccc}
			1 &g_0^{-1}g_1 &g_0^{-1}g_2 &\cdots &g_0^{-1}g_{k-1} &g_0^{-1} &{}&{}&{}&{}\\
			{} &1 &g_0^{-1}g_1 &\cdots &g_0^{-1}g_{k-2} &g_0^{-1}g_{k-1} &g_0^{-1}&{}&{}&{}\\
			{} &{} &\ddots &\vdots &\vdots &\vdots &\vdots &\ddots &{} &{}\\
			{} &{} &{} &\ddots &\vdots &\vdots &\vdots &\ddots &\ddots &{}\\
			{} &{} &{} &{} &1 &g_0^{-1}g_1 &g_0^{-1}g_2 &\cdots &\cdots &g_0^{-1} \\
	\end{array}\end{pmatrix}_{k\times 2k},
	\]
	\[
H=\begin{pmatrix}\begin{array}{cccc|cccc}
			1 &b_{e-k-1} &b_{e-k-2} &\cdots &\cdots &\cdots &\cdots &\cdots\\
			0 &1 &b_{e-k-1} &b_{e-k-2} &\cdots &\cdots &\cdots &\cdots\\
			\vdots &\ddots &\ddots &\ddots &\ddots &\ddots  &\ddots &\vdots \\
			0 &\cdots &0 &1 &b_{e-k-1} &b_{e-k-2} &\cdots &b_{e-2k}\\
	\end{array}\end{pmatrix}_{k\times 2k}.
	\]
	Assume that $W_1G_1=(I_k\mid G_2)$ and $W_2H=(I_k\mid H_1)$, where $W_1,W_2$ are the product of elementary matrices, respectively. In fact, let $g_j'=g_0^{-1}g_j$, we have
	\[
		W_1=\prod_{r=2}^{k}A_r,\;\;W_2=\prod_{r=2}^{k}B_r.
	\]
Here, change the $r$-th column of $I_k$ to $(-g_{r-1}',-g_{r-2}',\cdots,-g_1',1,0,\cdots,0)^T$, we get $A_r$, and change the $r$-th column of $I_k$ to $(-b_{e-k-r+1},-b_{e-k-r+2},\cdots,-b_{e-k-1},1,0,\cdots,0)^T$, we get $B_r$, where $2\leq r\leq k$.
	Note that $C=C^{\perp}$ if and only if $W_1G_1=W_2H$.
	Checking the $k$-th row of $W_1G_1$ and $W_2H$, we would have
	\[b_{e-k-r}=g_r', {\rm where} \: 1\leq r\leq k.\]
	which means $W_1=W_2$, then $C$ is self-dual if and only if $G_1=H$. Let $h(x)=h_H(x)+h_J(x)$, where $h_H(x)=g_0^{-1}x^{e-2k}g^*(x)$ and $\deg(h_J(x))\leq e-3k$. If $e-3k<0$, since $G_1=H$, we have $h(0)=0$, which contradicts $h(x)g(x)=x^e-1$. Therefore, $e\geq 3k$. As $g(x)h(x)=x^e-1$, we have
	\[
	x^e-1=h(x)g(x)=g_0^{-1}x^{e-2k}g^*(x)g(x)+h_J(x)g(x).
	\]
	Here, $e-2k\geq \deg(h_J(x)g(x))$. If $\deg(h_J(x)g(x))\neq e-2k$, which means there will leave the term $x^{e-2k}$. Hence, $\deg(h_J(x)g(x))=e-2k$, i.e. $h_J(x)g(x)=-x^{e-2k}-1$. Therefore, we have
	\[
	x^{2k}+1=g_0^{-1}g^*(x)g(x).
	\]
	Since $g(x)\mid x^{2k}-bx^i-c$, by Proposition 5.1, we have $i=k$ and $g(x)=x^k-a$. Then
	\[
	-a(x^{2k}+1)=g(x)g^*(x)=(x^k-a)(-ax^k+1)=-a(x^{2k}+1)+(1+a^2)x^{k},
	\]
	i.e. $a^2=-1$. Then $g(x)$ has the form $g(x)=x^k-a$, where $a^2=-1$.
	
	Now we prove the sufficiency. Since $g(x)=x^k-a$ and $a^2=-1$, the generator matrix $G$ of $C$ has the form $G=(-aE_k\mid E_k)$, and $aG=(E_k\mid aE_k)$, then the parity-check matrix of $C$ is $H=(-aE_k\mid E_k)$, which means $G=H$, then $C$ is self-dual. Furthermore, since the first and $(k+1)$-th columns of $H$ are linearly dependent, by Theorem 2.2, the minimum distance of $C$ is $2$. Since $g(x)g^*(x)=g(0)(x^{2k}+1)$, i.e. $g(x)\mid x^{2k}+1$, which means $C$ is actually a constacyclic code since $x^{2k}+1$ is also an associate polynomial of $C$.
\end{proof}

\begin{ex}
Over $GF(5)$, a polycyclic code of length $4$ generated by $g(x)=x^2+2\mid x^4+4x^2+4$ is self-dual.
\end{ex}
\begin{ex}
Over $GF(5)$, a polycyclic code of length $4$ generated by $g(x)=x^2-2\mid x^4-4x^2+4$ is self-dual.
\end{ex}
\section{Self-orthogonal and Dual-containing Polycyclic Codes}
In \cite{ref1}, the authors provided two cases in which self-orthogonal or dual-containing polycyclic codes do not exist. This section is devoted to discussing Conjecture 4.10 and Conjecture 4.11 in \cite{ref1}. We will prove that Conjecture 4.10 in \cite{ref1} is correct (see Theorem 6.1), while we will prove that Conjecture 4.11 is wrong (see Conjecture 6.1) by giving a counter example (i.e. Example 6.1).

\begin{prop}
	Over $\mathbb{F}_2$, $\gcd(x^n+1,x^n+x^i+1) = 1$.
\end{prop}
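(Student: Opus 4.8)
The plan is to prove that over $\mathbb{F}_2$ the polynomials $x^n+1$ and $x^n+x^i+1$ are coprime by a direct Euclidean-algorithm argument, exploiting the fact that subtraction and addition coincide in characteristic $2$. First I would subtract one polynomial from the other: since we work over $\mathbb{F}_2$, we have
\[
(x^n+x^i+1)+(x^n+1)=x^i,
\]
so $\gcd(x^n+1,\,x^n+x^i+1)=\gcd(x^n+1,\,x^i)$. The key observation is that $x^n+1$ has nonzero constant term (its value at $0$ is $1$), so $x\nmid x^n+1$, and therefore $x^i$ shares no irreducible factor with $x^n+1$. Hence $\gcd(x^n+1,\,x^i)=1$, which gives the claim.

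For this argument to be fully rigorous I would be a little careful about the edge case $i=0$, where $x^i=1$ and the gcd is trivially $1$, and about whether $i<n$ is assumed (in the trinomial setting $0<i<n$, so $x^i$ is a genuine proper power of $x$); in either case the conclusion is unaffected since the point is only that $x^n+1$ is not divisible by $x$. I would phrase the first reduction using the relation $\gcd(f,g)=\gcd(f,\,f+g)$, which holds because $f+g$ is an $\mathbb{F}_2[x]$-combination of $f$ and $g$ and conversely $g=f+(f+g)$, so the two pairs generate the same ideal of $\mathbb{F}_2[x]$.

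There is essentially no hard step here: the only thing to get right is the characteristic-$2$ cancellation that collapses $x^n+x^i+1$ minus $x^n+1$ to the single monomial $x^i$, and then the elementary fact that a polynomial with nonzero constant term is coprime to any power of $x$. The proposition is plainly a warm-up lemma used to rule out self-orthogonal or dual-containing behaviour over $\mathbb{F}_2$ in the subsequent theorem, so I would keep the proof to two or three lines and not introduce any machinery beyond the basic properties of the gcd in a PID.
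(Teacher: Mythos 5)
Your proof is correct and follows essentially the same route as the paper: both reduce $\gcd(x^n+1,\,x^n+x^i+1)$ to $\gcd(x^n+1,\,x^i)$ via the characteristic-$2$ cancellation, the only cosmetic difference being that the paper finishes with one more Euclidean division step ($x^n+1=x^i\cdot x^{n-i}+1$) while you invoke the nonzero constant term of $x^n+1$.
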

\begin{proof}
In fact, using the Euclidean algorithm, we have
	\begin{align*}
	 x^n+x^i+1&=(x^n+1)\cdot 1+x^i,\\
	 x^n+1&=x^i\cdot x^{n-i}+1,\\
	 x^i&=1\cdot x^i.
	\end{align*}
Therefore, $\gcd(x^n+1,x^n+x^i+1)=\gcd(x^n+1,x^i)=\gcd(x^i,1)=1$.
\end{proof}

\begin{thm} (Conjecture $4.10$ in \cite{ref1})
There are no self-dual or self-orthogonal polycyclic codes or nontrivial dual-containing polycyclic codes over the binary field, where these codes are associated with trinomials.
\end{thm}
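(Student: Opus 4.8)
The plan is to reduce everything to two elementary observations over $\mathbb{F}_2$ and then to invoke Proposition 6.1 for the dual-containing case. First I would normalize the setting: over $\mathbb{F}_2$ a trinomial $x^n-ax^i-b$ with $a,b\neq 0$ is forced to be $f(x)=x^n+x^i+1$ with $0<i<n$, so $f(0)=1$; and a polycyclic code $C$ generated by $g(x)\mid f(x)$ may be identified with the set of polynomials of degree $<n$ divisible by $g$, where $g(0)=1$. The crucial scalar to track is $f(1)=1+1+1=1\neq 0$, so $(x+1)\nmid f(x)$, and hence $(x+1)\nmid g(x)$ for any generator $g$ of such a code.

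For the self-orthogonal case (and a fortiori the self-dual case) I would argue directly with the generator. If $C$ is nonzero and self-orthogonal, then $g\in C\subseteq C^{\perp}$, so $g$ is orthogonal to itself; over $\mathbb{F}_2$ the self inner product equals $\sum_s g_s^2=\sum_s g_s=g(1)$, forcing $g(1)=0$, i.e. $(x+1)\mid g(x)$. This contradicts $(x+1)\nmid g$ established above. Since every self-dual code is nonzero and self-orthogonal, the same contradiction rules out self-dual codes, so no nonzero self-orthogonal or self-dual polycyclic code associated with a trinomial exists over $\mathbb{F}_2$.

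The dual-containing case is where the real work, and Proposition 6.1, enters. Suppose $C^{\perp}\subseteq C$ with $C$ nontrivial, so that $C^{\perp}\neq 0$. Then $C^{\perp}$ is self-orthogonal, whence every codeword of $C^{\perp}$ has even weight and $C^{\perp}$ is contained in the even-weight code $E=\{v:v(1)=0\}$. Dualizing the inclusion $C^{\perp}\subseteq E$ reverses it to $E^{\perp}\subseteq C$; since $E^{\perp}$ is the repetition code spanned by the all-ones vector $\mathbf 1(x)=1+x+\cdots+x^{n-1}=(x^n+1)/(x+1)$, this gives $\mathbf 1\in C$, i.e. $g(x)\mid (x^n+1)/(x+1)$, and in particular $g(x)\mid x^n+1$. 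But $g(x)\mid x^n+x^i+1$ as well, so by Proposition 6.1 we obtain $g(x)\mid \gcd(x^n+1,x^n+x^i+1)=1$, forcing $g=1$ and $C=\mathbb{F}_2^n$. Hence the only dual-containing code is the trivial whole space, and no nontrivial dual-containing polycyclic code associated with a trinomial exists.

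I expect the main obstacle to be the dual-containing direction, specifically recognizing that the containment should be converted into a statement about the even-weight/repetition duality, so that the membership $\mathbf 1\in C$, and thereby the divisibility $g\mid x^n+1$, becomes available to feed into Proposition 6.1; the self-orthogonal and self-dual parts are then immediate. A minor point to address is the bookkeeping of trivial codes, since the zero code is vacuously self-orthogonal and the whole space is vacuously dual-containing, which is exactly why the statement is phrased for nonzero self-orthogonal and nontrivial dual-containing codes.
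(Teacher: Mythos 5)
Your proof is correct, and while the self-orthogonal/self-dual part coincides with the paper's argument, your dual-containing part takes a genuinely different and shorter route. For self-orthogonality you note $g\cdot g=g(1)$ over $\mathbb{F}_2$ and that $g\mid f$ with $f(1)=1\neq 0$ forces $g(1)\neq 0$; the paper says exactly this in the language of ``even versus odd number of terms'' of $g$ and of the product $g(x)h(x)=x^n+x^i+1$, so the two are the same observation in different clothing (and your reduction of the self-dual case to the self-orthogonal one is cleaner than the paper's deferral to the prior reference). The real divergence is in the dual-containing case: the paper works directly with the parity-check matrix of Lemma 5.1, derives $h_{e-s-n+d}=0$ from $\boldsymbol{\beta}_r\cdot\boldsymbol{\beta}_r^T=0$, and then runs a three-case analysis on the position of the first nonzero coefficient $h_u$ to eventually extract $g(x)\mid x^n-1$, which is fed into Proposition 6.1. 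You instead observe that $C^{\perp}\subseteq C$ makes $C^{\perp}$ self-orthogonal, hence contained in the even-weight code $E$, and dualizing gives $\langle\mathbf{1}\rangle=E^{\perp}\subseteq C$, i.e. $g(x)\mid (x^n+1)/(x+1)$, which feeds into the same Proposition 6.1. Your argument buys brevity and conceptual transparency: it isolates the single fact that matters (the all-ones word must lie in $C$) and avoids the coefficient bookkeeping and case analysis entirely, at the cost of invoking the standard facts $(C^{\perp})^{\perp}=C$ and $E^{\perp}=\langle\mathbf{1}\rangle$, which are unproblematic over a finite field. Your explicit handling of the trivial codes (the zero code and the full space) is also a point the paper only treats implicitly.
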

\begin{proof}
(i) For a self-dual polycylic code $C$, it has been proved in \cite{ref1}, so we will not repeat it.

(ii) For a self-orthogonal polycyclic code $C$, let $C$ be generated by $g(x)=\sum_{i=0}^{n-k}g_ix^i\mid x^{n}+x^i+1$ with $g_{n-k}=1$. Let $G$ be the generator matrix of $C$, which has the form in Theorem 2.1.
If we denote the $t$-th row of $G$ as $\boldsymbol{\delta}_t$ for $\: 1\leq t \leq k$, since $C$ is self-orthogonal, then we have
\[\boldsymbol{\delta}_1\cdot \boldsymbol{\delta}_1^T=0 \Leftrightarrow g_0^2+g_1^2+\cdots+g_{n-k}^2=0,\]
which means that $g(x)$ has even terms. Since $g(x)\mid x^{n}+x^i+1$, let $x^{n}+x^i+1=g(x)f(x)$, hence $g(x)f(x)$ also has even terms, even if there are some terms will cancel each other out as cancelling terms will come in pairs over $\mathbb{F}_2$. Then it leads to a contradiction because $x^{n}+x^i+1$ has odd terms. So there are no self-orthogonal polycyclic codes associated with trinomials.

(iii) For a dual-containing polycyclic code $C$, let $C$ be generated by $g(x)=\sum_{r=0}^{s}g_rx^r\mid x^n-x^i-1$, where $g_{s}=1$. We denote the $r$-th row of the generator matrix $G$ by $\boldsymbol{\delta}_r$ as we do in (ii). Here, $g(x)$ has odd terms since $x^n-ax^i-b$ has odd terms. Let $h(x)g(x)=x^e-1$, where $e=\ord(x^n-ax^i-b)$ and $h(x)=\sum_{r=0}^{e-s}h_rx^r$, thus the parity-check matrix $H$ of $C$ has the form in Lemma 5.1.
Let $\boldsymbol{\beta}_r$ be the $r$-th row of $H$, since $C$ is dual-containing, then we have $\boldsymbol{\beta}_r\cdot \boldsymbol{\beta}_r^T=0$ for $\: 1\leq r\leq s$.
Checking $\boldsymbol{\beta}_d$ and $\boldsymbol{\beta}_{d+1}$, we find that
\[\boldsymbol{\beta}_{d+1}\cdot\boldsymbol{\beta}_{d+1}^T+h_{e-s-n+d}^2=\boldsymbol{\beta}_d\cdot\boldsymbol{\beta}_d^T \Rightarrow h_{e-s-n+d}=0,\; {\rm where}\: 1\leq d \leq s-1.\]
Then we can simplify $H$ to the following form
\[H=\begin{pmatrix}
	\begin{array}{ccccccccc}
		h_{e-s}&h_{e-s-1} &\cdots&\cdots &h_{e-n} &0 &\cdots &\cdots &0\\
		0&h_{e-s} &h_{e-s-1} &\cdots&\cdots &h_{e-n} &0 &\cdots&0\\
		\vdots &\ddots &\ddots &\ddots &\ddots &\ddots  &\ddots &\ddots&\vdots \\
		0 &\cdots &0 &h_{e-s} &h_{e-s-1} &\cdots &\cdots &\cdots &h_{e-n} \\
	\end{array}
\end{pmatrix}_{s\times n}.
\]
Here, $n-s\geq s$. Let the first non-zero element of the consequence $h_{e-n},\cdots,h_{e-s}$ are recorded as $h_u$. Since $g_0=g_s=h_{e-s}=1$ and $C^{\perp}\subseteq C$. There are three cases to be discussed.

Case 1. If $u>e-2s$, then $\boldsymbol{\beta}_1$ cannot be represented as a linear combination of $\boldsymbol{\delta}_1,\cdots,\boldsymbol{\delta}_{n-s}$.

Case 2. If $u=e-2s$, then $\boldsymbol{\delta}_1=\boldsymbol{\beta}_1$, then $\boldsymbol{\delta}_1\boldsymbol{\delta}_1^T=0$, which means $g(x)$ has even terms.

Case 3. If $e-n\leq u<e-2s$, let $h(x)=h_H(x)+h_J(x)$, where $h_H(x)=x^u(\sum_{k=u}^{e-s}h_kx^{k-u})$ and $deg(h_J(x))\leq e-s-n$. Here we need $e\geq s+n>3s$, or $h(0)=0$, which leads to a contradiction since $h(x)g(x)=x^e-1$. Since we can write $\boldsymbol{\beta}_1=\sum_{k=1}^{e-2s-u+1}c_k\boldsymbol{\delta}_k$, where $c_1=c_{e-2s-u+1}=1$, we have $(x^{-u}h_H(x))^*=(\sum_{k=1}^{e-2s-u+1} c_kx^{k-1})g(x)=(1+\cdots+x^{e-2s-u})g(x)$. Therefore,
\[\begin{aligned}
h(x)&=x^u\bigg(\sum_{k=1}^{e-2s-u+1} c_kx^{k-1}\bigg)^*g^*(x)+h_J(x),\\
x^e-1=h(x)g(x)&=x^u(1+\cdots+x^{e-2s-u})g^*(x)g(x)+h_J(x)g(x).  \;\;\;\;\;(1)\\
\end{aligned}\]
Since $\deg(x^{u})=u\geq e-n\geq deg(h_J(x)g(x))$, then there will leave the term $x^u$ if $u\neq e-n$ or $e-n\neq \deg(h_J(x)g(x))$, which means the equation $(1)$ is wrong. If $u=e-n=\deg(h_J(x)g(x))$, then $h_J(x)g(x)=x^u-1$. We have
\[\begin{aligned}
x^e-1&=x^u(1+\cdots+x^{e-2s-u})g^*(x)g(x)+x^u-1,\\
x^{n}-1&=(1+\cdots+x^{e-2s-u})g^*(x)g(x).
\end{aligned}\]
Then $g(x)\mid x^n-1$. Since $g(x)\mid x^n-x^i-1$, according to Proposition 6.1, we have $g(x)=1$, which is the generator polynomial of $\mathbb{F}_2^n$, and this is the extreme situation. Thus, there are no nontrivial dual-containing polycyclic codes associated with trinomials.
\end{proof}

\begin{con}(Conjecture 4.11 in \cite{ref1})
A polycyclic code is not dual-containing if the generator polynomial $g(x)\mid x^n-ax^i-b$, where $n$ is prime.
\end{con}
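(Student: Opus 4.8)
The statement of Conjecture 6.1 is in fact false, so the plan is to refute it by producing an explicit counter-example, and the first thing I would do is understand why no such example can exist over $\mathbb{F}_2$ and then argue that the obstruction evaporates over larger fields. The binary non-existence in Theorem 6.1(iii) ultimately rests on a parity argument: self-orthogonality forces the generator $g(x)$ to have an even number of nonzero terms, which is incompatible with the three-term shape of $x^{n}-x^{i}-1$ because over $\mathbb{F}_2$ cancellations occur in pairs and the number of terms is controlled modulo $2$. This mechanism is specific to the binary field, since over $\mathbb{F}_q$ with $q>2$ the relevant inner products $\boldsymbol{\delta}_r\cdot\boldsymbol{\delta}_r^{T}$ are genuine sums of squares rather than parities of term counts. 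I would therefore look for counter-examples precisely in the range $q>2$.

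The cheapest counter-examples come for free from Theorem 5.1: a self-dual code satisfies $C=C^{\perp}$ and is in particular dual-containing, and Theorem 5.1 already produces self-dual polycyclic codes generated by $g(x)=x^{k}-a$ with $a^{2}=-1$ and associate length $n=2k$. Taking $k=1$ gives the prime length $n=2$, so over any field in which $-1$ is a square (for instance $\mathbb{F}_5$, where $a=2$) one may pick a genuine trinomial $x^{2}-bx-c$ with $b,c\neq 0$ divisible by $x-a$ — writing $x^{2}-bx-c=(x-a)(x-d)$ and choosing $d\notin\{0,-a\}$ keeps both middle and constant coefficients nonzero — and the resulting length-$2$ self-dual code already contradicts the conjecture. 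To obtain a \emph{strictly} dual-containing example of odd prime length, I would search directly: since $C^{\perp}\subseteq C$ forces $\dim C^{\perp}\leq\dim C$, i.e. $\deg g\leq n/2$, hence $\deg g\leq (n-1)/2$ for odd prime $n$, I would enumerate, for small odd primes $n$ and small $q>2$, the trinomials $x^{n}-ax^{i}-b$ together with their divisors $g(x)$ of degree at most $(n-1)/2$ and test the containment.

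The hard part will be verifying $C^{\perp}\subseteq C$ correctly, because, unlike for cyclic codes, the Euclidean dual of a polycyclic code associated with the trinomial $f$ is not itself a polycyclic code with associate polynomial $f$. I would instead realize $C^{\perp}$ through the explicit parity-check matrix $H$ of Lemma 5.1, built from $h(x)=(x^{e}-1)/g(x)$ with $e=\ord(f)$, and then check that every row of $H$ lies in the row space of the generator matrix $G$ of $C$; this is a finite linear-algebra test once $n$, $q$, $f$ and $g$ are fixed, and it can be confirmed in Magma as in the paper's other examples. The remaining care is purely bookkeeping: ensuring that the associate polynomial is a genuine trinomial ($a,b\neq 0$), that $n$ is prime, and that the code is nontrivial ($g\neq 1$ and $g\neq f$), so that the chosen example genuinely refutes the blanket claim of Conjecture 6.1.
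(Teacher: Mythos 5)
Your proposal is correct and takes essentially the same route as the paper: the conjecture is refuted by an explicit counter-example over a non-binary field, verified exactly as you describe by checking that the rows of the Lemma 5.1 parity-check matrix lie in the row space of $G$ (the paper's Example 6.1 takes $g(x)=x+1\mid x^3+x-1$ over $\mathbb{F}_3$ with $n=3$ prime, which your finite search would find immediately). Your extra shortcut via length-$2$ self-dual codes from Theorem 5.1 is also a legitimate, if degenerate, refutation of the blanket claim, though the paper opts for a strictly dual-containing code of odd prime length instead.
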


The following example is a counter example of Conjecture 6.1.

\begin{ex}
Over $\mathbb{F}_3$, $C$ is a polycyclic code generated by $g(x)=x+1\mid x^3+x-1$, where $x^3+x-1=(x+1)(x^2-x-1)$. The generator matrix $G$ and the parity-check matrix $H$ are
\[G=\begin{pmatrix}
	1 &1 &0\\
	0 &1 &1\\
\end{pmatrix},\;\;
H=\begin{pmatrix}
	1 &2 & 1\\
\end{pmatrix}.\]
Since $110+011=121$, we have $C^{\perp}\subseteq C$, i.e. $C$ is dual-containing.	\\
\end{ex}
\section{Conclusion}

The main contributions in this paper are as follows.
\begin{enumerate}
\item [(i)] We studied Conjecture 3.5 in \cite{ref1} in Section 3, which is incorrect because of Example 3.1. Example 3.1 illustrates that the polycyclic codes constructed from polynomials of the same order do not always find a particular mapping between them, which needs to be one-to-one and the corresponding codes are equivalent to each other. By modifying the conditions in Conjecture 3.5, we give a mapping $\varphi$ to connect two kinds of polycyclic codes associated with $f(x)$ and $f^*(x)$ respectively, where $f(0) \neq 0$ in Theorem 3.3. Based on equivalence, we only need to study polycyclic codes associate with $f(x)$ to get the structure of polycyclic codes associate with $f^*(x)$.

\item [(ii)] We prove Conjecture 4.1 and Conjecture 4.5 in \cite{ref1}, and give Example 4.2 to show that Conjecture 4.2 in \cite{ref1} is wrong in Section 4. We study isodual polycyclic codes associated with trinomials in this section, and give a useful way to construct isodual polycyclic codes, which are generated by $g(x)$ and associated with $g^2(x)$ in Theorem 4.4 (corresponds to Conjecture 4.1 and 4.5 in \cite{ref1}). However, in Example 4.2, we give an isodual polycyclic code generated by $g(x)\mid x^n+x^i+1$ with $g^2(x)\neq x^n+x^i+1$ over $\mathbb{F}_2$, which means $g^2(x)=x^n+x^i+1$ is just a necessary condition of isodual polycyclic codes over $\mathbb{F}_2$, and over $\mathbb{F}_q$.

\item [(iii)] We prove Conjecture 4.7 in \cite{ref1} in Section 5. We show the unique form of generator of self-dual polycyclic codes associated with trinomials, that is $x^k-a$ and $a^2=-1$ over $\mathbb{F}_q$ in Theorem 5.1 (corresponds to Conjecture 4.7 in \cite{ref1}). And the minimum distance of all self-dual polycyclic codes is $2$. Theorem 5.1 also shows that there is only one way to construct a polycyclic code associated with trinomials.

\item [(iii)] We discussed Conjecture 4.10 and Conjecture 4.11 in Section 6. Theorem 6.1 (corresponds to Conjecture 4.10 in \cite{ref1}) illustrates that it is impossible to find self-dual, self-orthogonal or nontrivial dual-containing polycyclic codes associated with trinomials over $\mathbb{F}_2$. Example 6.1 is a counter example of Conjecture 4.11 in \cite{ref1}, which means that there exist prime length polycyclic codes associated with trinomials over $\mathbb{F}_q$.

\end{enumerate}

To sum up, we have solved all the conjectures in \cite{ref1}.

\end{document}